\newtheorem{theorem}{Theorem}
\newtheorem{lemma}{Lemma}
\newtheorem{observation}{Observation}
\newtheorem{corollary}{Corollary}
\newtheorem{myclaim}{Claim}
\newtheorem{definition}{Definition}
\newcommand{\congest}{\ensuremath{\mathsf{CONGEST}}\xspace}
\newcommand{\local}{\ensuremath{\mathsf{LOCAL}}\xspace}
\newcommand{\clique}{\ensuremath{\mathsf{Congested\ Clique}}\xspace}
\newcommand{\bin}{\texttt{bin}\xspace}
\begin{document}

\begin{titlepage}

\author{Keren Censor-Hillel\orcidlink{0000-0003-4395-5205} \and
Majd Khoury\orcidlink{0009-0003-5125-975X}}

\title{On Distributed Computation of the Minimum Triangle Edge Transversal\thanks{Department of Computer Science, Technion, \{ckeren,majd{-}khoury\}@cs.technion.ac.il}.}


\maketitle
\thispagestyle{empty}

\begin{abstract}
The distance of a graph from being triangle-free is a fundamental graph parameter, counting the number of edges that need to be removed from a graph in order for it to become triangle-free. Its corresponding computational problem is the classic \emph{minimum triangle edge transversal} problem, and its normalized value is the baseline for triangle-freeness testing algorithms. While triangle-freeness testing has been successfully studied in the distributed setting, computing the distance itself in a distributed setting is unknown, to the best of our knowledge, despite being well-studied in the centralized setting.

~\\This work addresses the computation of the minimum triangle edge transversal in distributed networks. We show with a simple warm-up construction that this is a global task, requiring $\Omega(D)$ rounds even in the \local model with unbounded messages, where $D$ is the diameter of the network. However, we show that approximating this value can be done much faster. A $(1+\epsilon)$-approximation can be obtained in $\text{poly}\log{n}$ rounds, where $n$ is the size of the network graph. Moreover, faster approximations can be obtained, at the cost of increasing the approximation factor to roughly 3, by a reduction to the minimum hypergraph vertex cover problem. With a time overhead of the maximum degree $\Delta$, this can also be applied to the $\mathsf{CONGEST}$ model, in which messages are bounded. 

~\\Our key technical contribution is proving that computing an exact solution is ``as hard as it gets'' in $\mathsf{CONGEST}$, requiring a near-quadratic number of rounds. Because this problem is an \emph{edge selection} problem, as opposed to previous lower bounds that were for \emph{node selection} problems, major challenges arise in constructing the lower bound, requiring us to develop novel ingredients.

\end{abstract}

\end{titlepage}

\section{Introduction} 
Subgraph-freeness, and specifically triangle-freeness, are cornerstone graph properties that have been widely studied in many computational settings. Moreover, in distributed computing, triangle-free networks are known to exhibit faster algorithms for finding large cuts~\cite{HirvonenRSS17} and girth~\cite{CensorFGLLO21}, and for coloring~\cite{PettieS13}. 

For a graph that is not triangle-free, a central computational question is to compute how far it is from being so. The distance of a graph from satisfying a property has been extensively studied in the area of property testing, which asks whether the removal of any $\epsilon$-fraction of edges from the graph makes it satisfy the property~\cite{GoldreichGR98}. In particular, distributed triangle-freeness testing has been studied in~\cite{CFSV19} and in~\cite{EvenFFGLMMOORT17,FraigniaudO17}, where the latter two showed solutions within $O(1/\epsilon)$ rounds of communication in the \congest model~\cite{PelegBook}, where $n$ nodes of a network communicate in synchronous rounds with $O(\log{n})$-bit messages to their neighbors in the underlying graph.

A fundamental computational question is what is the \emph{exact} number of edges that need to be removed from a graph to make it triangle-free. This is called the \emph{minimum triangle edge transversal (MTET) problem}, and has received attention in centralized settings~\cite{ErdosGT96,KortsarzLN10,Krivelevich95} but its complexity in distributed networks has not been studied.

In this paper, we study MTET in the aforementioned classic distributed setting of \congest, as well as in the \local model~\cite{Linial92}, which is similar but without the restriction on the message size. In a distributed solution for triangle edge transversal, each node of the graph marks some of its adjacent edges, such that the set of marked edges cover all triangles in the graph, where we say that an edge covers a triangle if it is a part of that triangle. A distributed solution for MTET thus outputs a cover of minimum size.

\vspace{-0.3cm}
\subsection{Our Contribution}
As a warm-up for studying the problem, we first show that the minimum triangle edge transversal problem is \emph{global}, in the sense that solving it necessitates communication throughout the network. This translates to a tight complexity $\Theta(D)$ in the \local model, where $D$ is the diameter of the graph, as can be seen by a simple graph construction.
\begin{restatable}{theorem}{trianglesLOCALexact}
	\label{thm:trianglesLOCALexact}
	Any distributed algorithm in the \local model for computing a minimum triangle edge transversal requires $\Omega(D)$ rounds.
\end{restatable}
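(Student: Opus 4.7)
The plan is to prove the lower bound via a standard indistinguishability argument, using a family of cycles of edge-sharing triangles whose minimum transversal depends on a global parity invariant.

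Let $C_L$ be the graph on vertex set $\{v_0,\dots,v_{L-1}\}$ obtained from the cycle $v_0v_1\cdots v_{L-1}v_0$ by adding the chord $v_iv_{i+2}$ for every $i$ modulo $L$. For $L$ large enough this creates exactly $L$ triangles $T_i=\{v_i,v_{i+1},v_{i+2}\}$ in which consecutive triangles $T_{i-1},T_i$ share the cycle edge $v_iv_{i+1}$. A short counting argument gives $\mathrm{MTET}(C_L)=\lceil L/2\rceil$: each edge lies in at most two triangles (equality only on cycle edges), and alternating cycle edges attain the lower bound, with one additional edge forced when $L$ is odd. Since the chords halve distances only by a constant, $C_L$ has diameter $\Theta(L)$, so taking $L=\Theta(D)$ yields graphs of the required diameter. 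I would then compare $G_1=C_{4D}$ and $G_2=C_{4D+1}$, whose MTETs are $2D$ and $2D+1$ respectively.

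Assume for contradiction that some deterministic algorithm $A$ produces a minimum transversal in $T=o(D)$ rounds. I would assign identifiers so that a long contiguous arc of positions around a chosen reference vertex carries the same ID sequence in both graphs, while identifiers on the short complementary arc are chosen distinctly. Every node inside the matched arc then has a literally identical $T$-labelled neighborhood in $G_1$ and $G_2$, so $A$'s output at that node (the set of incident marked edges) is identical in the two executions, and the marks on every edge interior to the matched arc agree. To derive the contradiction I would then invoke the structural observation that any cover of $C_L$ of size $\lceil L/2\rceil$ is, up to harmless local swaps, an alternating pattern of cycle edges determined by a single binary ``phase'' choice, with at most one unavoidable parity defect in the odd case. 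Since the marks on the matched arc are identical in the two runs, the phase of the alternation is pinned globally; but the complementary arcs of $G_1$ and $G_2$ differ in length by exactly one and so have opposite parity, so a single fixed phase cannot close the alternation cleanly on both. In at least one of the two executions $A$ must therefore either mark an extra cycle edge on the complementary arc (cover size exceeding MTET, violating minimality) or leave a triangle uncovered (violating validity), contradicting correctness and forcing $T=\Omega(D)$.

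The main obstacle I anticipate is making the ``alternation is forced on long arcs'' step rigorous, since an optimal cover could in principle use diagonal chords in nontrivial ways. I plan to handle this via a short exchange lemma: any diagonal appearing in a size-$\lceil L/2\rceil$ transversal of $C_L$ can be replaced by an adjacent cycle edge without increasing the cover size, reducing the analysis to the clean alternating case to which the phase-pinning argument applies directly.
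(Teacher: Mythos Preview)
Your construction and the alternation analysis are correct for the even ring $C_{4D}$: its only optimal transversals are indeed the two pure alternations of cycle edges, so on the matched arc the output of $A$ is forced to be some fixed phase $\phi$. The gap is in the odd case. In $C_{4D+1}$ the optimum has size $2D+1$, and as you yourself note it carries one parity defect---but the \emph{location} of that defect is not forced. Any of the $4D+1$ positions can host it, and in particular it can sit entirely inside the complementary arc, where the identifiers (and hence the $T$-neighborhoods) differ between the two instances. So an algorithm can legitimately output phase $\phi$ on the matched arc in both runs, continue phase $\phi$ cleanly around the complementary arc of $C_{4D}$ (yielding a transversal of size $2D$), and output phase $\phi$ with a single defect on the complementary arc of $C_{4D+1}$ (yielding size $2D+1$). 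Both outputs are minimum transversals and no contradiction arises. Your exchange lemma about diagonals is fine but does not touch this: the problem is not the shape of the solution on the matched arc, it is the freedom in where the defect lands in the odd graph.

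The paper avoids this by using a \emph{line} of triangles rather than a ring. A $t$-line has a \emph{unique} minimum transversal (Claim~\ref{clm:tria_line_opt}), so there is no phase freedom at all; the paper then compares a $t$-line with a $(t{+}2)$-line obtained by attaching one triangle at each end in a way that reverses the orientation of that unique solution, so the middle node must mark a different incident edge in the two graphs but cannot distinguish them in $o(D)$ rounds. If you want to salvage the ring idea you would have to break its rotational symmetry---e.g.\ by attaching a small gadget that forces one specific cycle edge into every optimum, thereby pinning the phase globally rather than only on the matched arc. (A secondary issue you also do not address: $C_{4D}$ and $C_{4D+1}$ have different vertex counts, and in \local the nodes typically know $n$; the paper handles the analogous discrepancy by padding with dummy nodes.)
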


Despite the problem being a global one, we show that approximate solutions can be obtained fast. Using the ball-carving technique of \cite{GhaffariKM17}, we provide a $(1+\epsilon)$-approximation algorithm in the \local model.

\begin{restatable}{theorem}{trianglesLOCALonePlusEps}
	\label{theorem:OnePlusEpsInLOCAL}
	There is a distributed algorithm in the \local model that obtains a $(1+\epsilon)$-approximation to the minimum triangle edge transversal in $poly(\log{n},1/\epsilon)$ rounds.
\end{restatable}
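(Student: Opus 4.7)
The plan is to instantiate the randomized ball-carving technique of Ghaffari, Kuhn, and Maus in order to reduce the global MTET problem to many independent local subproblems, each of which can be solved exactly in the \local model since messages are unbounded.

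Concretely, I would first invoke the ball-carving procedure to partition, in $\poly(\log n, 1/\epsilon)$ rounds, the vertex set into clusters of strong diameter $r = O(\log n/\epsilon)$. The property I need is that, for every triangle $T$ in the input graph, all three vertices of $T$ together with their $1$-neighborhoods lie in a single common cluster with probability at least $1-\epsilon$. This follows from the standard analysis of exponentially-shifted balls, in which each vertex falls within distance $1$ of a cluster boundary with probability $O(\epsilon)$, combined with a union bound over the three vertices of $T$. Each cluster leader then gathers the cluster's induced subgraph together with a one-hop boundary in $O(r)$ rounds, computes an \emph{optimal} edge transversal for the set of \emph{interior} triangles contained in its view, and broadcasts the selected edges back to the cluster so every node learns which of its incident edges to mark. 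Triangles whose vertices straddle cluster boundaries are each covered by a single arbitrarily chosen incident edge, using a deterministic local rule such as the lowest identifier.

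For the approximation analysis, fix an arbitrary optimal MTET $S^{\ast}$. Since all three edges of an interior triangle of a cluster $C$ lie in $E(C)$, the restriction $S^{\ast}\cap E(C)$ is a feasible cover for $C$'s interior triangles, so the locally-optimal solutions sum to at most $|S^{\ast}| = \mathrm{OPT}$. The expected number of boundary triangles is at most $\epsilon$ times the total number of triangles in the graph. The main obstacle is converting this last quantity into a bound of $O(\epsilon)\cdot \mathrm{OPT}$, because the total triangle count can be much larger than $\mathrm{OPT}$ (a single edge may cover many triangles). I would address this by working with the LP relaxation of MTET, viewed as hypergraph vertex cover on the triangle hypergraph, and charging boundary triangles to the fractional mass of $\mathrm{OPT}$: since the exponentially-shifted ball-carving distributes boundary events nearly uniformly across edges, each unit of LP mass absorbs only an $O(\epsilon)$ fraction of boundary cost, so that rescaling $\epsilon$ and the ball radius appropriately gives a $(1+\epsilon)$-approximation in expectation. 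A standard high-probability boost via $O(\log n)$ independent repetitions, keeping the best feasible solution, then yields the stated result in $\poly(\log n, 1/\epsilon)$ rounds.
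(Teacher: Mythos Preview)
Your proposal contains a genuine gap in the boundary-triangle charging step, and the LP fix you sketch does not work.

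Consider the graph on vertices $u,v,w_1,\dots,w_{n-2}$ with the single edge $\{u,v\}$ together with all edges $\{u,w_i\}$ and $\{v,w_i\}$. Here $\mathrm{OPT}=1$ (the edge $\{u,v\}$ covers every triangle), and the LP optimum is also $1$. Under any low-diameter decomposition with edge-cutting probability $\Theta(\epsilon/\log n)$, the event that $\{u,v\}$ is cut has constant-times-$\epsilon/\log n$ probability, and when it occurs \emph{all} $n-2$ triangles become boundary triangles. Paying one edge per boundary triangle (or even one edge per cut edge that lies in a triangle, of which there are $\Theta(n)$ here) yields expected extra cost $\Theta(\epsilon n/\log n)$, which is not $O(\epsilon)\cdot\mathrm{OPT}$. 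Your sentence ``each unit of LP mass absorbs only an $O(\epsilon)$ fraction of boundary cost'' is exactly what fails: the single unit of LP mass on $\{u,v\}$ would have to absorb $\Theta(n)$ units of boundary cost. More generally, a fixed-radius random decomposition controls the probability that any \emph{given} edge is cut, but MTET cost is not a sum over edges, so there is no linear functional of the cut-edge indicators to which you can apply linearity of expectation and compare to $\mathrm{OPT}$.

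The paper avoids this by using the adaptive variant of ball carving: around each center $v$ one grows the radius in steps and stops at the first radius $r$ for which the optimal cover of all triangles touching $B_r(v)$ is at most $(1+\epsilon)$ times the optimal cover of triangles touching $B_{r-2}(v)$. One then pays for the larger cover but carves out only the smaller ball, so the cost is charged to the restriction of $\mathrm{OPT}$ to (a neighborhood of) the carved region, and these restrictions are disjoint across iterations. The stopping radius is $O(\log n/\log(1+\epsilon))$ because the cover sizes cannot keep growing geometrically beyond $m\le n^2$. This is the missing idea: the radius must be chosen \emph{adaptively based on the growth of the local optimum}, not fixed in advance; a fixed-radius decomposition with per-edge cutting guarantees is insufficient for edge-selection problems where one edge can cover unboundedly many triangles. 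Parallelizing the sequential carving via a network decomposition then gives the $\poly(\log n,1/\epsilon)$ bound.
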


Moreover, we show that we can approximate the task even faster, at the expense of a slightly larger approximation factor, as follows. 

\begin{restatable}{theorem}{trianglesLOCALthreeAndthreePlusEps}
	\label{theorem:threeInLOCAL}
	There are distributed algorithms in the \local model that obtain a $3$-approximation and a $(3+\epsilon)$-approximation to the minimum triangle edge transversal in $O(\log n)$ and $O(\frac{\log n}{\log\log n})$ rounds, respectively.
	The same approximation guarantees can be obtained in the \congest model with a multiplicative overhead of $\Delta$ in the round complexity.
\end{restatable}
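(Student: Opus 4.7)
The plan is to reduce MTET to minimum vertex cover (MVC) on a $3$-uniform hypergraph and then simulate a known distributed MVC algorithm on that hypergraph. First I would define an auxiliary $3$-uniform hypergraph $H$ with vertex set $V(H) = E(G)$ and hyperedge set $E(H) = \{\{e_1, e_2, e_3\} : e_1, e_2, e_3 \text{ are the edges of a triangle in } G\}$. A subset $F \subseteq E(G)$ is a triangle edge transversal of $G$ if and only if the corresponding subset of $V(H)$ is a vertex cover of $H$, so the two minimum values coincide, and any cover produced for $H$ translates directly into an edge transversal for $G$ with the same size.

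Next, I would invoke existing distributed primal-dual algorithms for MVC on $k$-uniform hypergraphs. Such algorithms are known to give a $k$-approximation in $O(\log n)$ rounds, and, in the spirit of the Bar-Yehuda--Censor-Hillel--Schwartzman speedup, a $(k+\epsilon)$-approximation in $O(\log n / \log \log n)$ rounds. Instantiating these with $k = 3$ yields the claimed $3$- and $(3+\epsilon)$-approximations, provided each round on $H$ can be executed within a bounded number of rounds on $G$.

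For the LOCAL simulation I would use a standard trick: in $O(1)$ rounds every node learns its $2$-hop neighborhood, so every node can enumerate all triangles incident to its edges. Letting the endpoint of smaller ID simulate each edge as a hypernode, every hyperedge (triangle) of $H$ corresponds to three hypernodes whose simulators lie within $O(1)$ hops in $G$. Consequently, one round of the hypergraph algorithm can be simulated in $O(1)$ LOCAL rounds, giving the claimed LOCAL complexities. The same plan works in \congest, except that each edge of $G$ can lie in up to $\Delta - 1$ triangles, and each triangle requires its own short message exchange per hypergraph round; serializing these exchanges over the links of $G$ multiplies the round count by $O(\Delta)$, matching the stated overhead.

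The main obstacle is showing that the chosen hypergraph MVC algorithm really fits into an $O(\Delta)$-per-round simulation in \congest. This requires verifying that every per-hyperedge message in the primal-dual procedure fits in $O(\log n)$ bits (appropriately quantized dual weights and indicator bits for tentative inclusion) and that the hypergraph communication pattern only needs information flow among the three edges of each triangle of $G$, so that pipelining $O(\Delta)$ such messages across each link of $G$ suffices. Once these simulation details are established, the approximation guarantees and round complexities transfer from the hypergraph algorithm directly to MTET, completing the proof.
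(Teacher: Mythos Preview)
Your proposal is correct and follows essentially the same approach as the paper: reduce MTET to minimum hypergraph vertex cover on the $3$-uniform hypergraph whose vertices are the edges of $G$ and whose hyperedges are the triangles, then simulate a known distributed $(f)$- and $(f+\epsilon)$-approximation MHVC algorithm on $G$ with constant overhead in \local and $O(\Delta)$ overhead in \congest (since each edge lies in at most $\Delta$ triangles). The paper instantiates this with the algorithm of Ben-Basat, Even, Kawarabayashi, and Schwartzman and formalizes the simulation via a coherency argument over both endpoints of each edge rather than assigning a single simulator, but these are implementation details of the same plan.
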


	To obtain the above, we prove a reduction from  minimum triangle edge transversal to the \emph{minimum hypergraph vertex cover} problem (MHVC), in which one needs to find the smallest set of vertices that cover all hyperedges. The high-level outline of the reduction is that we consider each edge in the input graph $G$ as a node in a 3-uniform hypergraph $H_G$ and each triangle in $G$ is a hyperedge in it. To allow the reduction to go through, we show that we can simulate $H_G$ over $G$ without a significant overhead. Finally, we plug the MHVC algorithm of \cite{BenBasatEKS19} into our reduction to obtain the stated round complexity. 
	We note that all of our approximation algorithms apply also to the weighted version of the minimum triangle edge transversal.
	
	Finally, our key technical contribution is in showing that finding an exact solution for this problem is essentially ``as hard as it gets'' in the \congest model, requiring a near-quadratic number of rounds.

	\begin{restatable}{theorem}{trianglesCONGESTlower}
		\label{theorem:MTET_lower_bound_congest}
		Any distributed algorithm in the \congest model for computing a minimum triangle edge transversal or deciding whether there is a triangle edge transversal of size $M$ requires $\Omega(n^2/(\log^2 n))$ rounds.
	\end{restatable}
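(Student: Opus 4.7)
The plan is to follow the standard two-party communication framework for near-quadratic \congest lower bounds, reducing from Set Disjointness on $K = \Theta(n^2)$ bits, which requires $\Omega(K)$ bits of communication between Alice and Bob. I would construct, for each pair of inputs $x, y \in \{0,1\}^K$, a graph $G(x,y)$ on $\Theta(n)$ vertices together with a vertex partition $(V_A, V_B)$ such that (i) the edges inside $V_A$ depend only on $x$, (ii) the edges inside $V_B$ depend only on $y$, (iii) the cut $E(V_A, V_B)$ has size $O(\log n)$ and is independent of $x$ and $y$, and (iv) the decision $|\mathrm{MTET}(G(x,y))| \leq M$ for a threshold $M$ computable by each party from its own input is equivalent to $\mathrm{DISJ}_K(x,y)$. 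Given such a family, the standard simulation of a $T$-round \congest algorithm costs $O(T \log^2 n)$ bits of communication, since $O(\log n)$ cut edges each carry an $O(\log n)$-bit message per round, so the $\Omega(K) = \Omega(n^2)$ disjointness lower bound yields $T = \Omega(n^2/\log^2 n)$.

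The heart of the proof is the graph construction. I would design it so that a baseline cost $M_0$ is incurred by triangles depending only on $x$ or only on $y$, with each party able to compute its share of $M_0$, and so that on top of this baseline each intersection coordinate $k$ with $x_k = y_k = 1$ contributes one additional unit of unavoidable cost to the MTET. A natural candidate is to introduce bit-vertices $\alpha_k \in V_A$ and $\beta_k \in V_B$ for each $k \in [K]$, surrounded by local gadgets that produce certain mandatory triangles precisely when $x_k = 1$ or $y_k = 1$; cross-side triangles that appear exactly when $x_k = y_k = 1$ would then be created by routing through a compact index gadget living on the $O(\log n)$-edge cut, with coordinates addressed by their $O(\log K) = O(\log n)$-bit binary labels. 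Verifying the equivalence ``$|\mathrm{MTET}(G)| \leq M$ iff $x,y$ are disjoint'' then reduces to a careful accounting of how cheaply the various triangle families can share covering edges.

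The main obstacle, as the authors emphasize, is that MTET is an edge-selection problem. In node-selection lower bounds such as those for minimum vertex cover, every vertex's output is privately held by one of the two parties and sharp cost thresholds follow from local rigidity arguments. For MTET, however, an optimal transversal can take a single edge incident to the cut and simultaneously cover many triangles associated with different coordinates, destroying the additive structure needed to read off disjointness from the MTET value. I would tackle this by engineering structural rigidity into the gadget so that the triangles produced at intersection coordinates can be covered only by private edges within $V_A$ or within $V_B$, never by cut edges; this can be attempted by overloading cut-incident triangles with mandatory baseline triangles that consume every cut-incident edge in any optimal solution, and by ensuring that distinct intersection triangles share edges only with other intersection triangles of the same party. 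Making this rigidity hold in parallel across $\Theta(n^2)$ independent coordinates while keeping the cut at only $O(\log n)$ edges is the technically demanding step where the novel ingredients the authors allude to will be required, and I would expect these to include a new family of triangle gadgets with no direct analogue in previous vertex-selection constructions.
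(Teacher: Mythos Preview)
Your framework is correct and matches the paper's: reduce from Set Disjointness on $K=\Theta(n^2)$ bits over an $O(\log n)$-edge cut and apply the standard simulation argument. But the concrete construction you sketch cannot work as stated. You propose a bit-vertex $\alpha_k\in V_A$ (and $\beta_k\in V_B$) for each coordinate $k\in[K]$; with $K=\Theta(n^2)$ this already forces $\Theta(n^2)$ vertices, contradicting your own requirement that $|V|=\Theta(n)$. Any scheme that dedicates one vertex, or even one private edge, per input bit is ruled out by the same count. The paper (like the minimum vertex cover lower bound it builds on) instead uses a \emph{two-dimensional} encoding: the input is indexed as $x_{ij}$ with $i,j$ ranging over a $\Theta(n)$-sized set, and bit $x_{ij}=0$ is encoded by the presence of a single edge $\{a_1^i,a_2^j\}$ between two $\Theta(n)$-sized vertex sets $A_1,A_2$ on Alice's side (and symmetrically for Bob). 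The input thus lives in the \emph{edges} of a bipartite pattern, not in per-bit vertices.

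Two further points of divergence. First, your baseline $M_0$ depends on both $x$ and $y$; for the decision version ``is there a transversal of size $\le M$?'', the threshold $M$ must be known to all nodes and hence fixed across the family, and indeed the paper's $M$ is a purely combinatorial constant depending only on the construction parameter $k$. Second, the direction is reversed: in the paper an intersection index $(i,j)$ with $x_{ij}=y_{ij}=1$ means the edges $\{a_1^i,a_2^j\}$ and $\{b_1^i,b_2^j\}$ are both \emph{absent}, which allows four specific center-incident edges to be omitted from the transversal, so the MTET \emph{drops} to exactly $M$ iff the inputs are not disjoint. Making this exact equality hold at a fixed $M$ is where the paper's genuinely new gadgets enter: cliques $C_S$ attached to the center nodes via connector triangles to enforce ``impartiality'' (exactly one of $\mu_{k+1}+1$ candidate edges is omitted, with no preference among them), and $20$-rings of triangles replacing the $4$-cycles of the vertex-cover bit-gadget to synchronize Alice's and Bob's omitted indices $(i,j)$ across the $O(\log n)$ cut. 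Your outline correctly anticipates that ``a new family of triangle gadgets'' is needed but does not arrive at either mechanism, and the rigidity strategy you describe (forcing cut-incident edges to be consumed by baseline triangles) is not how the paper proceeds.
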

	
We emphasize that the fact that MTET is NP-hard is insufficient for deducing that it is hard for \congest. As explained in~\cite{AbboudCKP21}, there are NP-hard problems that are solvable in $o(n^2)$ rounds in \congest (and there are problems in P that require $\tilde{\Omega}(n^2)$ rounds).
To prove Theorem~\ref{theorem:MTET_lower_bound_congest}, we follow the successful approach of reducing the 2-party set disjointness problem to the distributed minimum triangle edge transversal problem. We stress that our construction requires novel insights, as we overview in what follows.

\vspace{-0.3cm}
\subsection{Technical overview of the \congest lower bound} 
	We consider the framework for reducing a 2-party communication problem to a problem $P$ in the \congest model, as attributed first to the work of~\cite{PelegR00}. Our presentation follows the one of~\cite{AbboudCKP21}. The goal is to construct a family of graphs, who differ only based on the inputs to the two parties, and show that a solution to the graph problem $P$ determines the output of the joint function $f$ that they have to compute. Then, the parties, Alice and Bob, simulate a distributed algorithm for solving $P$ to derive their output. The round complexity of the distributed algorithm is then bounded from below by the number of bits that must be exchanged between the two players for solving $f$ over their joint inputs, divided by the number of bits that can be exchanged between them per each round of the simulation. The latter is strongly affected by the size of the \emph{cut} between the two node sets that the players are responsible for simulating. Specifically, for the lower bound to be high, we need the cut between the two sets of nodes that the players simulate to be as small as possible.

	To obtain a lower bound for MTET in the \congest model, we need to find a 2-party function $f$ and to construct a family of lower bound graphs so that the value of MTET for a graph determines the value of $f$ on the joint inputs that the graph represents. The challenges for a near-quadratic lower bound are: (i) we need the communication complexity of $f$ to be linear in the input size, (ii) we need the input size of $f$ to be quadratic in the number of nodes of the lower bound graphs, and (iii) we need the cut to have at most a polylogarithmic size.

Many of the \congest problems for which lower bounds were obtained by this framework use the well-known 2-party set disjointness problem, in which the input to each of the two players is a binary string, and the output of the joint  function is 1 if and only if there exists an index in which the bit in both strings is 1.
~\\

\textbf{The Challenges}. A crucial difference between MTET and all of these problems is that \emph{MTET is an edge selection problem, rather than a node selection one}. This characteristic appears to make lower bounds harder to obtain, and we extract the central challenges below.
	
	To illustrate the difficulties, we consider the lower bound construction for the minimum vertex cover problem~\cite{AbboudCKP21}. Even without the need yet to delve into the specifics of the construction, a reasonable attempt for converting it to MTET would be to replace each node in the construction with an edge, and each edge with a triangle, so that edges cover triangles in the new construction, just as nodes cover edges in the one for minimum vertex cover. 
 
 The first glaring problem with this is that the original construction has $\Theta(n^2)$ edges, so the new one will have this number of nodes. This violates requirement (ii) above, and is a major issue since any lower bound for MTET that one could obtain this way would be at most $\tilde{\Omega}(n)$. 
 Yet, we stress that even this is insufficient, as attaching the edges into triangles in such a conversion imposes significant challenges. We focus below on two major ones.
 	
	~\\\underline{\textbf{Challenge 1:}} One obstacle is that an essential property of the lower bound graphs for the minimum vertex cover problem is their symmetry, as follows. There are four sets of cliques in the construction ($A_1,A_2,B_1,B_2$ in Figure \ref{fig:MVC lower bound graphs}), two for each player, and the input to each player corresponds to a bipartite graph between its two cliques, by adding an edge if and only if its corresponding index in the input is a 0 bit (Dashed edges in Figure \ref{fig:MVC lower bound graphs}. There are additional fixed nodes and edges which are not yet needed for the intuition here). An optimal solution for the fixed part of the construction does not \emph{prefer} any node of a clique over the others, in the sense that for each node in each clique, there is an optimal solution not containing it. 

\begin{figure}[!ht]
		\centering
		\includegraphics[width=6cm]{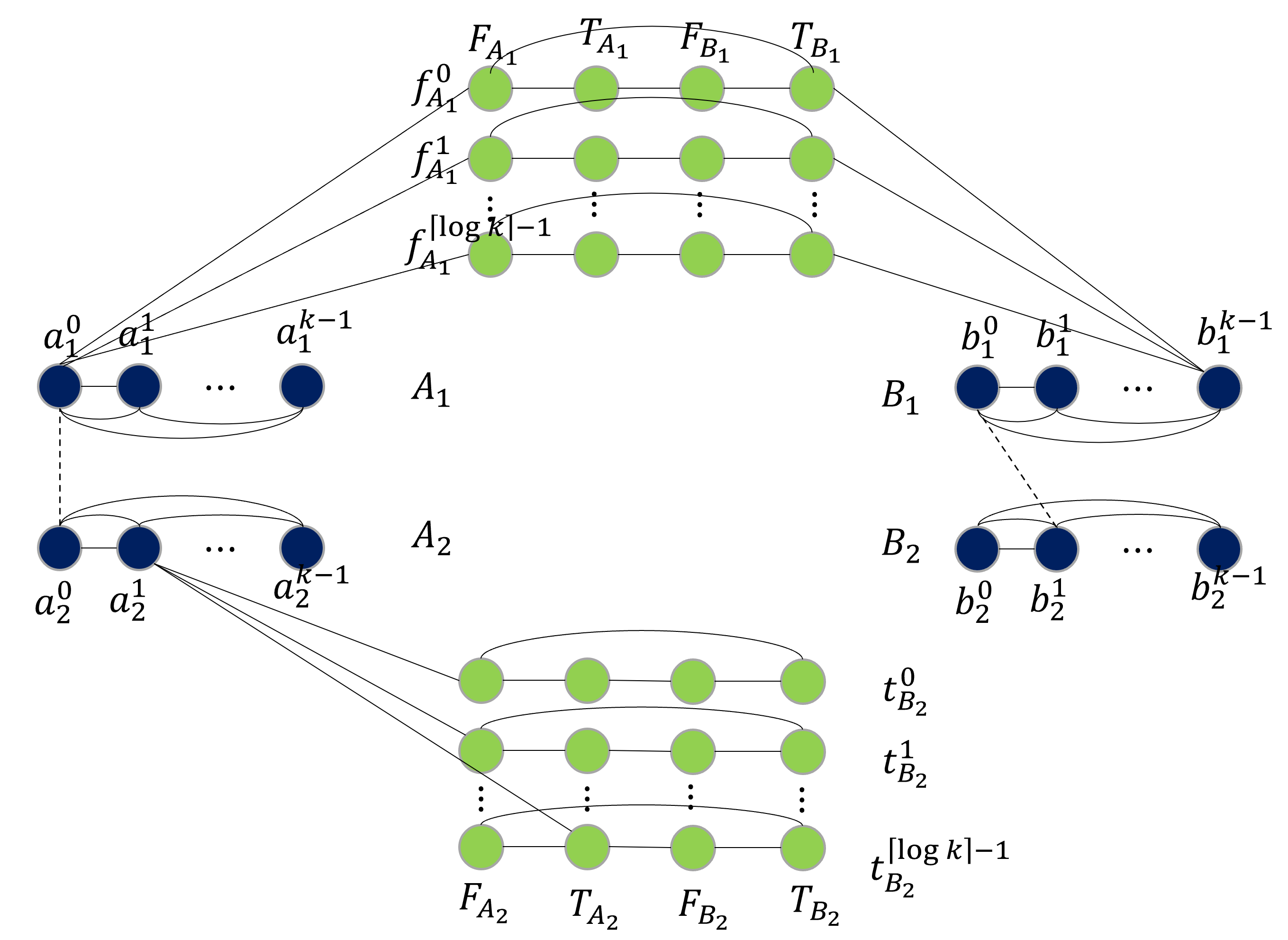}
		\caption{The family of lower bound graphs for minimum vertex cover~\cite{AbboudCKP21}.}
		\label{fig:MVC lower bound graphs}
\end{figure} 
This \emph{impartiality} feature is vital because we want to be able to choose any single one of the clique nodes to be omitted from the vertex cover, once we take into account also the need to cover the edges between the two cliques which are added according to the set disjointness input. In other words, it allows to not select a node from each of the two cliques such that this pair of nodes \emph{do not have an edge between them}, thus capturing an index in the input with a 1 bit.

 Once we try this approach for MTET, we need to switch to \emph{edge selection}. To this end, we replace each clique with an independent set of the same size and connect all nodes of the two independent sets of a certain side to a central node on that side (node $a$ for Alice and node $b$ for Bob). This implies that each input edge corresponds to a triangle between its two endpoints and the central node of its side. There will be more edges and triangles in the construction, such that we use the edges to the central node in order to cover them. Our hope is that a missing input edge (due to an input bit of 1) will allow us to avoid taking the corresponding edge from this pair of nodes to the central node into an MTET solution. However, we arrive at the following issue.
	For MTET, the only graphs that have this flavor of symmetry in optimal solutions, in the sense that they do not prefer certain edges over others, are either cliques or empty-edge graphs. We thus must attach a clique to the set of edges for which we want all but one of them to be in an optimal solution. 
 
 This obtains the property of symmetry, but doing this in a na\"ive manner prevents an optimal solution from having the ability to capture a missing input edge. To cope with this caveat, we need to configure the fixed part of the lower bound graph in a way that it prefers choosing all but one edge in certain edge sets over any other number of edges from it, or otherwise the size of the solution grows by 1 (due to the need to then take other edges in the construction that touch this set). Therefore, the solution favors choosing all but one edge from the set, \emph{but without any preference as to which edge it leaves out}. Then, for the final graph, the edge chosen to be left out corresponds to not needing to cover a missing triangle due to a missing input edge, as we wanted. Our construction that eventually overcomes the above consists of various cliques that we very carefully attach to the nodes among which we insert input edges.
	
	By using the set disjointness function, we satisfy requirement (i) above, of having the complexity of the input function be linear in its input size. Due to our above approach for the general structure of the lower bound graphs, our construction satisfies requirement (ii) of having an input size that is quadratic in the number of nodes.\footnote{It was suggested to us to replace this part of the construction with a different underlying graph with $\Theta(n)$ nodes, which is known to have $\Theta(n^2/2^{\sqrt{\log n}})$ edge-disjoint triangles~\cite{ruzsa1978triple}. Such a graph would give a smaller lower bound, but an even more severe issue is that it is not clear how to incorporate it in a construction that will allow the MTET size to correspond to the solution to the set disjointness input.}

~\\\underline{\textbf{Challenge 2:}} The above discussion provided intuition about how to construct a lower bound graph that relates its MTET \emph{for each of the two players} to their set disjointness \emph{input}. A crucial aspect that remains to be handled is how to relate \emph{the overall size} of the MTET in the entire graph to the set disjointness \emph{output}. Specifically, in our MTET solution, we need to be able to omit the same pair of edges on both sides of the construction in case the inputs to the 2-party task are not disjoint.

For the vertex cover problem, a similar property was shown using the general notion of bit-gadgets~\cite{AbboudCKP21}. Roughly speaking, each of the two main cliques on each side of the construction (to which input edges were added) had a logarithmic number of corresponding nodes to which they were attached (a clique $S\in\{A_1,A_2,B_1,B_2\}$ has the bit-gadgets $F_S$ and $T_S$, see Figure \ref{fig:MVC lower bound graphs}). These bit-gadget nodes were connected to the clique nodes according to the binary representation of their order. Each pair of bit-gadget nodes on one side are connected to the corresponding pair on the other side with a 4-cycle. This allows properly indexing the input, in the sense that if the 2-party inputs are not disjoint then an optimal solution omits the same pair of clique nodes on both sides of the construction. 

When we move from nodes that cover edges to edges that cover triangles, we must incorporate a new mechanism. To overcome this, we develop a gadget that we call a \emph{ring-of-triangles}, that replaces the simple cycles that are used for the vertex cover construction. We stress that attaching this gadget to the bit-gadget is not straightforward, and must be done in a careful manner. 
Eventually, because we make sure to still use only a logarithmic number of nodes for the cut, our construction also satisfies requirement (iii) of having a polylogarithmic cut size.

	\vspace{-0.3cm}
	\subsection{Related Work}
	Distributed triangle finding has been extensively studied, as well as finding other subgraphs (see~\cite{CH22} for a survey). In \cite{DolevLP12}, a tight $O(n^{1/3})$-round algorithm is given for listing all triangles in the \clique model~\cite{LotkerPPP05}, which resembles the \congest model but allows all-to-all communication. A matching lower bound was later established in~\cite{IzumiG17,PanduranganRS21}. Triangle listing in sparse graphs was addressed in~\cite{Censor-HillelLT20,PanduranganRS21}.
	In the \congest model, triangle listing was shown to be solvable within the same complexity in~\cite{ChangPSZ21}, by developing a fast distributed expander decomposition and using its clusters for fast triangle listing. Recently,~\cite{CensorHLV22} showed that this complexity can also be obtained in a deterministic manner, up to subpolynomial factors, based on the prior work of deterministic expander decomposition and routing of ~\cite{ChangS20}. In~\cite{HuangPZZ20}, a listing algorithm in terms of the maximum degree $\Delta$ is given.
	
	Triangle detection can be done in the \clique model in $O(n^{0.158})$ rounds~\cite{Censor-HillelKK19}, but no faster-than-listing solutions are known for it in \congest. In terms of lower bounds, it is only known that a single round does not suffice~\cite{AbboudCHKL}, even with randomization~\cite{Fischer+SPAA18}. Any polynomial lower bound would imply groundbreaking results in circuit complexity~\cite{EdenFFKO22}.

	Distributed testing algorithms were first studied in~\cite{BrakerskiP11}. Triangle-freeness testing in $O(1/\epsilon^2)$ rounds was given in ~\cite{CFSV19}, and improved to $O(1/\epsilon)$ rounds by~\cite{EvenFFGLMMOORT17,FraigniaudO17}.

	In the centralized setting, \cite{Krivelevich95} present an LP-based 2-approximation algorithm for MTET, while the integrality graph of the underlying LP was posed as an open problem. In~\cite{KortsarzLN10} a reduction is given from MTET to the minimum vertex cover problem for $(2-\epsilon)$-approximations, proving that approximating MTET to a factor less 2 is NP-hard. We stress that it is not clear how to make this reduction work in the \congest model.

	Near-quadratic lower bounds for \congest have been obtained before, e.g., for computing a minimum vertex cover, a minimum dominating set, a 3-coloring, and weighted max-cut~\cite{AbboudCKP21,bacrach2019hardness}. 
 
	\section{Hardness of Exact Computation of MTET}
	\label{sec:trianglesExact}
	
	In this section, we show that computing the exact MTET is a global problem for both the \local and \congest models, in the sense that it takes $\Omega(D)$ rounds in the \local model, and that in the \congest model its complexity is near-quadratic in $n$. We first provide some preliminaries which we use for proving the difficulty of solving the problem in both models.
	
	\subsection{Preliminaries}

	We proceed by showing properties of MTET in two special graphs, as follows. 
	
	\begin{definition}[A $t$-line of triangles]
        \sloppy{
		Let $G=\left(V,E\right)$ be an undirected graph with $t+2$ nodes, where $t\geq 2$ is even. Denote by $V=\left\{v_0,\dots,v_{t/2}\right\}\cup\left\{u_0,\dots,u_{t/2}\right\}$ the nodes of the graph.  Denote the edges by $E=\left\{\left\{v_i, v_{i+1}\right\}:0\le i \le t/2-1\right\} \cup \left\{\left\{u_i, u_{i+1}\right\}:0\le i \le t/2-1\right\}\cup\left\{\left\{v_i, u_i\right\}:0\le i \le t/2\right\} \cup\left\{\left\{v_i, u_{i+1}\right\}:0\le i \le t/2-1\right\}$. Then $G$ is called \emph{a $t$-line of triangles} (see Figure \ref{fig:line_of_triangles}).
        }
	\end{definition} 

 	\begin{figure}[!ht]
		\centering
		\includegraphics[width=6cm]{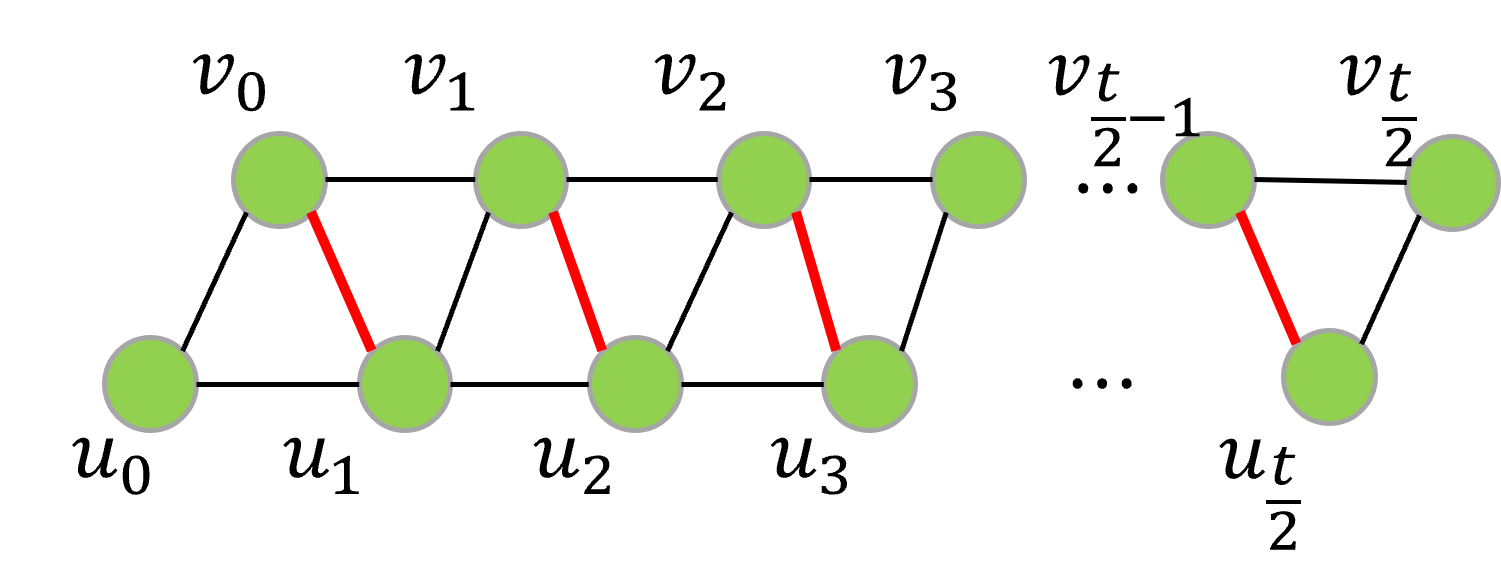}
		\caption{An illustration of a line of triangles. The bold red edges are the only optimal solution of triangle edge transversal.}
		\label{fig:line_of_triangles}
	\end{figure}

	It is immediate to see that a $t$-line of triangles  has exactly $t$ triangles. Moreover, Claim~\ref{clm:tria_line_opt} easily shows that it has a single MTET, and its size is $t/2$.

	\begin{restatable}{myclaim}{triaLineOpt}
		\label{clm:tria_line_opt}
		There exists only one MTET in a $t$-line of triangles, and its size is $t/2$.
	\end{restatable}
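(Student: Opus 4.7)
The plan is first to enumerate the triangles of $G$ and classify edges by how many triangles they contain, then derive the tight size bound, and finally argue uniqueness via a matching-type argument on a derived path.

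For each $0\le i\le t/2-1$ there is an \emph{upper} triangle $U_i=\{v_i,v_{i+1},u_{i+1}\}$ and a \emph{lower} triangle $L_i=\{v_i,u_i,u_{i+1}\}$, giving $t$ triangles in total. A direct check shows that only the diagonals $\{v_i,u_{i+1}\}$ and the interior verticals $\{v_j,u_j\}$ for $1\le j\le t/2-1$ lie in two triangles (the diagonal $\{v_i,u_{i+1}\}$ is shared by $U_i$ and $L_i$, the interior vertical $\{v_j,u_j\}$ by $U_{j-1}$ and $L_j$), while every other edge lies in exactly one triangle. Since each edge covers at most two triangles, every triangle edge transversal has at least $t/2$ edges; conversely, the $t/2$ diagonals cover every $U_i$ and every $L_i$, so the optimum size equals $t/2$.

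For uniqueness, I would argue via a conflict-graph reformulation. In any transversal of size $t/2$ a counting argument forces each selected edge to cover exactly two triangles, with pairwise disjoint covered pairs. Define the triangle conflict graph $H$ on vertex set $\{U_0,L_0,\dots,U_{t/2-1},L_{t/2-1}\}$ whose edges connect two triangles iff they share an edge of $G$. From the classification above, $H$ is the path $L_0-U_0-L_1-U_1-\cdots-L_{t/2-1}-U_{t/2-1}$ on $t$ vertices, where $L_i-U_i$ is witnessed by the diagonal $\{v_i,u_{i+1}\}$ and $U_i-L_{i+1}$ by the interior vertical $\{v_{i+1},u_{i+1}\}$. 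A size-$t/2$ transversal is therefore a perfect matching of $H$. The path on an even number of vertices admits a unique perfect matching: the degree-one endpoint $L_0$ must be matched to its sole neighbor $U_0$, and this choice propagates inductively down the path, forcing each $L_i$ to be paired with $U_i$. These conflict-graph edges correspond in $G$ to exactly the diagonals, yielding the unique MTET.

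The main obstacle is the uniqueness portion: the counting step restricts attention to edges covering two triangles, but then one still needs the path structure of $H$ together with the uniqueness of its perfect matching to pin the solution down. Once the conflict-graph viewpoint is set up, however, both ingredients are elementary, and the remaining argument is just bookkeeping.
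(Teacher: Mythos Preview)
Your argument is correct. The size bound portion is identical to the paper's proof: both observe that every edge lies in at most two triangles, giving the lower bound $t/2$, and both exhibit the diagonals $\{v_i,u_{i+1}\}$ as an optimal transversal.

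Where you diverge is in the uniqueness argument. The paper dispatches uniqueness in a single line, asserting that ``any other set of edges must use at least one edge that covers at most one triangle, and hence must be of size at least $t/2+1$.'' This is terse to the point of being elliptical: it does not explicitly address the possibility of a set consisting solely of degree-two edges but with two of them covering a common triangle. Your conflict-graph reformulation handles this cleanly: by observing that a size-$t/2$ transversal must pick $t/2$ edges each covering two triangles with all covered pairs disjoint, you reduce to finding a perfect matching in the path $L_0\text{--}U_0\text{--}L_1\text{--}\cdots\text{--}U_{t/2-1}$, and the endpoint-propagation argument gives uniqueness immediately. This buys you a genuinely complete proof at the cost of a little extra setup; the paper's version is shorter but leaves the reader to fill in the case analysis.
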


\begin{proof}
	\sloppy{
		By construction, every edge can cover at most 2 triangles. Thus, the minimum number of edges that is sufficient for covering $t$ triangles is at least $t/2$. The set $\left\{\left\{v_i, u_{i+1}\right\}:0\le i \le t/2-1\right\}$ is a triangle edge transversal of size $t/2$. Any other set of edges must use at least one edge that covers at most one triangle, and hence must be of size at least $t/2+1$.
	}
\end{proof}

	\begin{definition}[A $t$-ring of triangles]\label{def:ring_of_triangles}
		\sloppy{
			Let $G=\left(V,E\right)$ be an undirected graph with $t$ triangles, where $t$ is even. Then $G$ is called a \emph{$t$-ring of triangles} if the triangles can be ordered $T_0,\dots,T_{t-1}$, such that every two consecutive triangles, $T_i$ and $T_{(i+1) \mod t}, 0\le i \le t-1$, share an edge, and any other pair of triangles are edge disjoint (see Figure \ref{fig:ring_of_triangles}).
		}
	\end{definition}

 	\begin{figure}[!ht]
		\centering
		\hfill
		\begin{subfigure}[b]{0.45\textwidth}
			\centering
			\includegraphics[width=\textwidth]{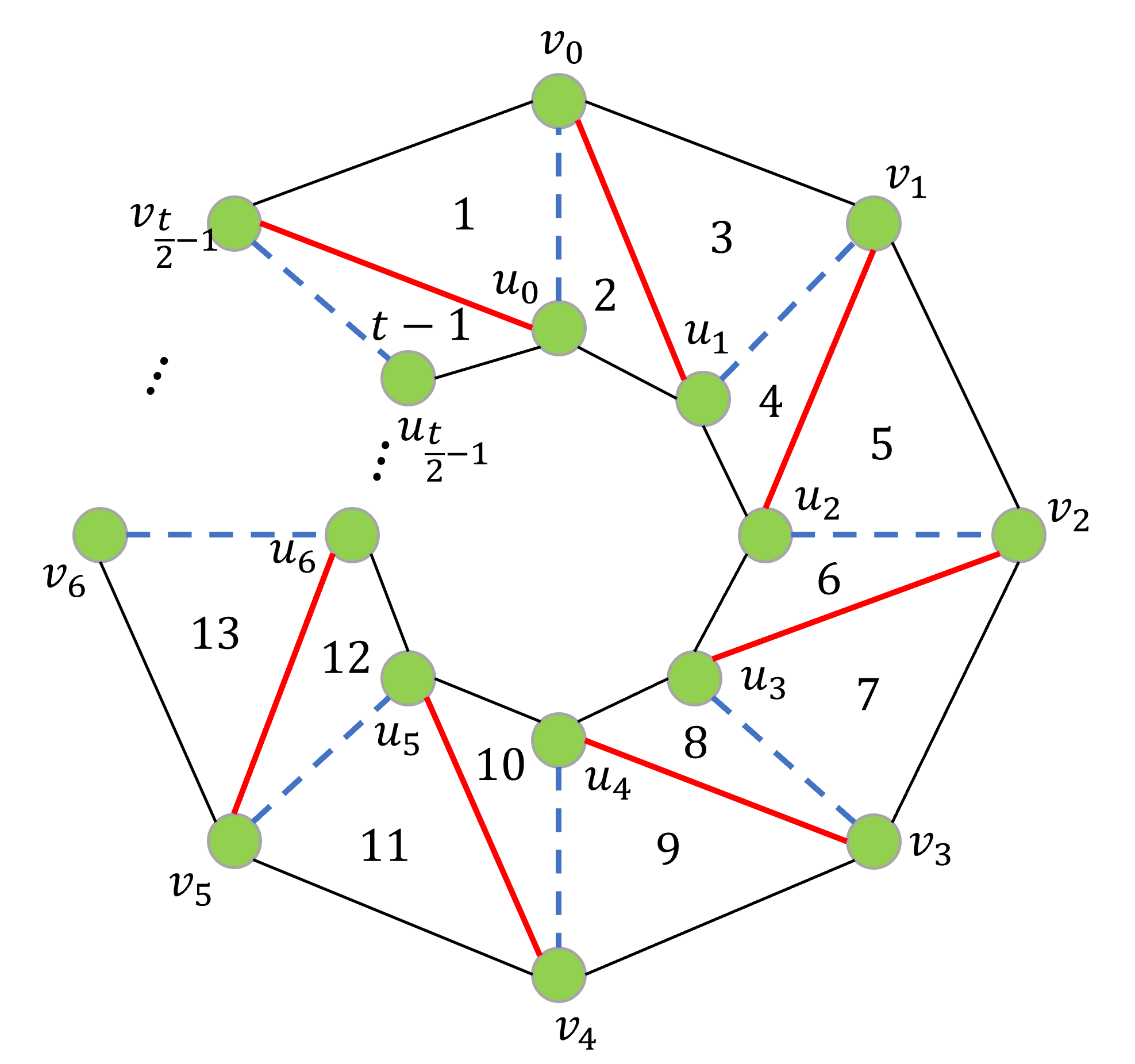}
		\end{subfigure}
		\hfill
		\begin{subfigure}[b]{0.45\textwidth}
			\centering
			\includegraphics[width=0.85\textwidth]{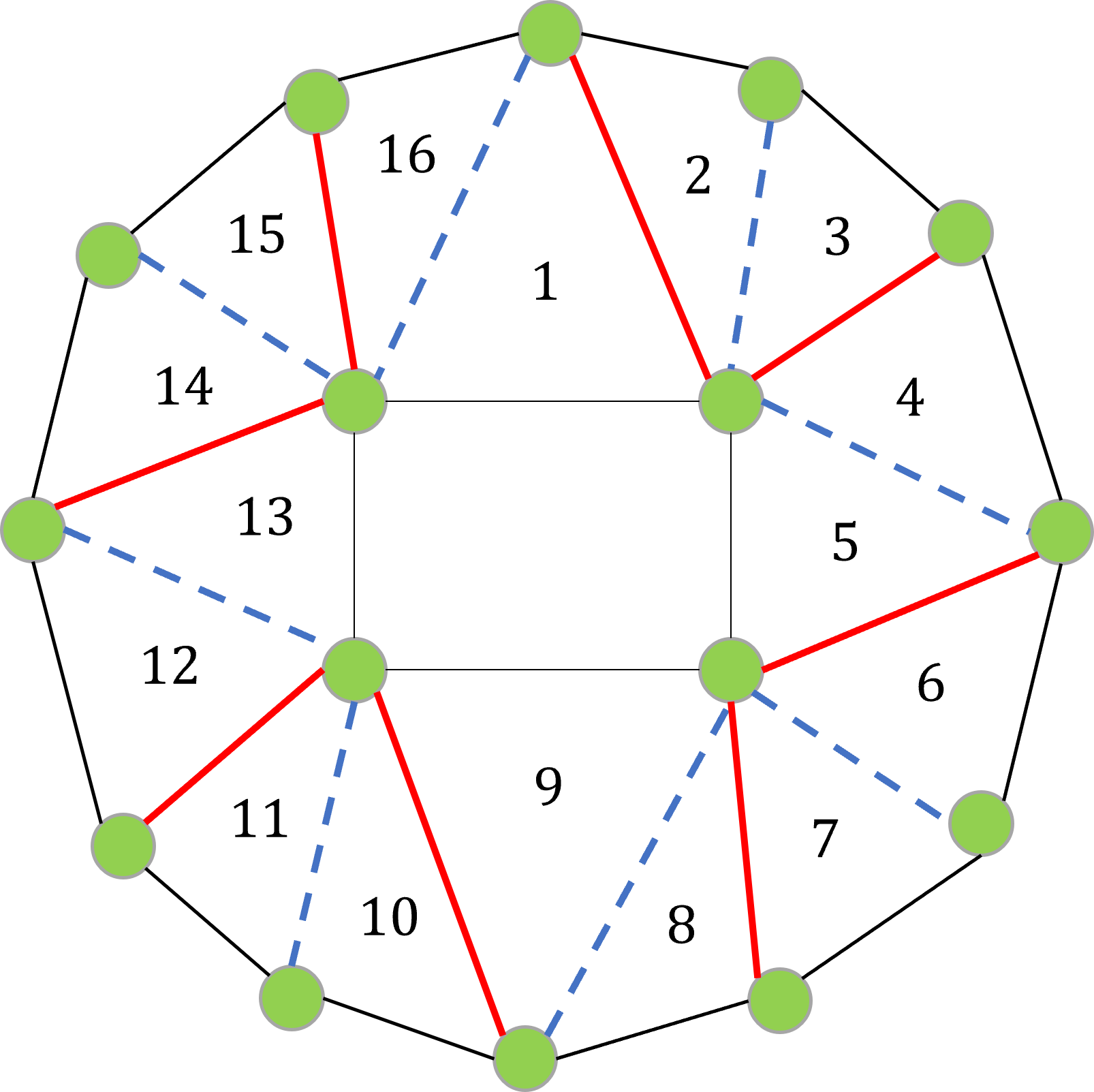}
		\end{subfigure}
		\hfill
		\caption{Two illustrations of $t$-rings of triangles. The numbers inside the triangles indicates their order. The bold red edges show an optimal solution to triangle edge transversal, and the dashed blue edges show the other optimal solution.}
		\label{fig:ring_of_triangles}
	\end{figure}

	\begin{restatable}{myclaim}{ringOfTrianglesOpt}
		\label{clm:ring_of_triangles_opt}
		There exist exactly two MTETs in a $t$-ring of triangles, and their sizes are $t/2$.
	\end{restatable}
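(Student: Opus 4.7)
The plan is to classify the edges of a $t$-ring of triangles into two kinds: the \emph{shared edges}, each of which is the common edge of two consecutive triangles $T_i, T_{i+1}$ (there are exactly $t$ such edges, one per consecutive pair, by Definition~\ref{def:ring_of_triangles}), and the \emph{private edges}, each lying in a single triangle. A shared edge covers $2$ triangles, a private edge covers $1$, and no edge covers more than $2$ since any two non-consecutive triangles are edge-disjoint.

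First I would establish the size lower bound. Since there are $t$ triangles and each edge covers at most $2$ of them, any triangle edge transversal has size at least $t/2$. Next, I would extract strong structural constraints from the assumption that some transversal $S$ has exactly size $t/2$. A simple counting argument (summing ``covers'' over edges of $S$) forces every edge of $S$ to cover exactly $2$ triangles, hence to be a shared edge, and forces every triangle to be covered by exactly one edge of $S$.

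The main step, and the place that requires a little care, is to translate the ``each triangle covered exactly once'' condition into a concrete description of $S$. Label the shared edges $e_0, \ldots, e_{t-1}$ so that $e_i$ is the edge common to $T_i$ and $T_{i+1 \bmod t}$; then $T_i$ is covered by $S$ iff $e_{i-1} \in S$ or $e_i \in S$, and being covered exactly once means exactly one of these two is in $S$. This is precisely saying that $S$, viewed as a subset of the vertex set of the cycle $C_t$ whose vertices are $e_0,\dots,e_{t-1}$ and whose edges link $e_{i-1}$ to $e_i$, is a perfect independent dominating set of $C_t$. Since $t$ is even, $C_t$ has exactly two such sets: the even-indexed shared edges and the odd-indexed shared edges.

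Finally I would conclude by checking that both candidate sets are indeed triangle edge transversals of size $t/2$, which is immediate from the definition of the $e_i$'s, and noting that the lower bound together with the structural analysis shows these are the only MTETs. The one subtlety to keep an eye on is making sure the labeling of shared edges around the ring is well-defined and that the parity argument genuinely uses $t$ being even; if $t$ were odd, a similar argument would yield no solution of size $t/2$ and the optimum would jump to $\lceil t/2 \rceil + $ an extra edge, so this hypothesis is essential.
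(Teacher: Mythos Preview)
Your proposal is correct and follows essentially the same approach as the paper's proof: both establish the $t/2$ lower bound via the ``each edge covers at most two triangles'' count, label the shared edges $e_0,\dots,e_{t-1}$, exhibit the two alternating sets as optimal solutions, and argue that any other transversal of size $t/2$ is impossible because it would have to contain a private edge or double-cover some triangle. Your uniqueness argument via the perfect independent dominating set of $C_t$ is a slightly more explicit formalization of what the paper states in one sentence, but the underlying idea is identical.
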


 \begin{proof}
	\sloppy{
		By construction, each edge is incident to at most two triangles, as otherwise there are at least three triangles that share the same edge, which contradicts the definition of a $t$-ring of triangles. Therefore, every edge can cover at most 2 triangles and thus the minimum number of edges required in order to cover $t$ triangles is at least $t/2$. Let us denote the edge shared by a pair of consecutive triangles $T_i, T_{(i+1) mod t}$ by $e_i$, for $0 \le i\le t-1$ (there are exactly $t$ such edges). The set $\{e_{2i}|0\le i\le t/2-1\}$ is an optimal triangle edge transversal of size $t/2$. The set  $\{e_{2i+1}|0\le i\le t/2-1\}$ is also an optimal triangle edge transversal of size $t/2$ . Any other set of edges either has two edges that cover the same triangle or an edge that covers a single triangle, and hence must be of size at least $t/2+1$.  
	}
\end{proof}

 	\subsection{A Linear Lower Bound in the \local Model}
	\label{subsec:trianglesExactLOCAL}
	We prove that computing an exact MTET is a global problem, requiring $\Omega(D)$ rounds, where $D$ is the diameter of the graph. 
	
        \trianglesLOCALexact*

        The distinguish-ability argument is the premise of the proof, as the following: If the algorithm finishes in less than a linear number (in $D$) of rounds, then certain nodes cannot distinguish between two graphs that they should output different outputs for, and therefore output the same answer for both graphs.

	\begin{proof}
		Consider the graph $G_1=(V_1,E_1)$ which is a $t$-line of triangles for some $t\geq 2$, where $V_1=\left\{v_0,\dots,v_{t/2}\right\}\cup\left\{u_0,\dots,u_{t/2}\right\}$. And consider the graph $G_2=(V_2,E_2)$ which is a $(t+2)$-line of triangles, where $V_2=\left\{v'_0,\dots,v'_{(t+2)/2}\right\}\cup\left\{u'_0,\dots,u'_{(t+2)/2}\right\}$, which is obtained from $G_1$ by attaching two triangles to the edges on both sides (formally, the nodes $u_i$ and $v_i$ in $G_1$ correspond to the nodes $u'_{(t/2+1)-i}$ and $v'_{t/2-i}$ in $G_2$, respectively, and the nodes $u'_0, v'_{t/2+1}$ are newly added). By Claim \ref{clm:tria_line_opt}, the optimal solutions for $G_1$ and $G_2$ are $S_1 = \{\{v_i, u_{i+1}\}: 1\le i \le t/2-1\}$ and $S_2 = \{\{v'_i, u'_{i+1}\}: 1\le i \le t/2\}$, respectively. That is, the output of a node $v_i$ in $G_1$ is that the edge $\{v_i, u_{i+1}\}$ is in the optimal solution. However, in the graph $G_2$, the corresponding node $v'_{t/2-i}$ outputs that the edge $\{v'_{t/2-i}, u'_{t/2-i+1}\}$ is in the solution, which corresponds to the edge $\{v_i, u_i\}$ in $G_1$. Thus the output of $v_i$ must be different in $G_1$ and $G_2$.
		
		Note that $G_1$ has $t+2$ nodes and $G_2$ has $t+4$ nodes. Since the number of nodes are usually given to nodes, we must create similar conditions in both graphs, thus we modify the graph $G_1$ by adding two additional nodes $w_1,w_2$ both connected to $v_{t/2-1}$ and $u_{t/2}$ and call the graph $G'_1$. Notice that the solutions for MTET in $G_1$ and $G'_1$ are the same. In addition, the diameter of $G'_1$ is $D_1=\frac{t}{2}+1$ and the diameter of $G_2$ is $D_2=\frac{t}{2}+2$.
		Assume towards a contradiction that there is an algorithm that completes in $D/2-1$ rounds on every graph. Then a node in the ``middle'' of the graph cannot see the nodes on its ends. To be precise, the node $v_{\lfloor t/4\rfloor}$ in $G'_1$, which corresponds to the node $v'_{t/2-\lfloor t/4\rfloor}$ in $G_2$ sends and receives the same information in both graphs, thus outputs the same incident edge to it. This output edge is either $\{v_{\lfloor t/4\rfloor}, u_{\lfloor t/4\rfloor}\}$ in $G'_1$ and $\{v'_{t/2 - \lfloor t/4\rfloor}, u'_{(t/2 + 1) - \lfloor t/4\rfloor}\}$ in $G_2$, or $\{v_{\lfloor t/4\rfloor}, u_{\lfloor t/4\rfloor-1}\}$ in $G'_1$ and $\{v'_{t/2 - \lfloor t/4\rfloor}, u'_{t/2 - \lfloor t/4\rfloor}\}$ in $G_2$. In both cases, one of its outputs is wrong, which implies that the algorithm cannot complete in $D/2-1$ rounds.
	\end{proof}

	\subsection{A Near-Quadratic Lower Bound in the \congest Model}
	\label{subsec:trianglesExactCONGEST}
	
	In this section, we focus on proving the following theorem, which concludes a near-quadratic lower bound on the exact MTET in the \congest model.
	
	\trianglesCONGESTlower*
	
	Note that a lower bound on deciding whether there is a triangle edge transversal of size $M$ implies a lower bound on computing a minimum
	triangle edge transversal, since counting the edges of a given triangle edge transversal set can be done in $O\left(D\right)$ via standard techniques (pipelining through a BFS tree). 
	To prove the lower bound on MTET, we take the approach of reducing problems from 2-party communication along the lines pioneered in the framework of \cite{PelegR00}, and specifically, we define a family of lower bound graphs 
	\cite[Definition 1]{AbboudCKP21} for MTET. The family of lower bound graphs we define bears some similarity to the family of the lower bound class that was defined for the minimum vertex cover problem in \cite{AbboudCKP21}. However, in order to handle triangles, we need to develop much additional machinery. 
	We start by formalizing the required background.

	\paragraph{\emph{\textbf{Triangle edge transversal properties in cliques.}}}
 \sloppy{
	Let $\mathcal{ET}\left(G\right)$ be the set of MTETs of the graph $G$. Let $N\left(v\right)$ be the neighborhood of a node $v$. Let $\mu\left(G,v\right)$ denote the maximum  number of edges that touch $v$ and are in the same MTET, that is,  $\mu\left(G,v\right) = \max_{{S}\in\mathcal{ET}}\left|\left\{\left\{u,v\right\}: u\in N\left(v\right)\right\}\cap S\right|$.
 }

	\begin{definition}[Edge mapping under node mapping]
		Let $G=(V,E)$ be an undirected graph, and let $S\subseteq E$  be a subset of the edges. Let $g: V \rightarrow V$ be a bijective node mapping. Then the \emph{edge mapping} $\varphi_g$ of $S$ under $g$ is defined as:
		$\varphi_g\left(S\right) = \left\{\left\{g\left(u\right),g\left(v\right)\right\}: \left\{u,v\right\}\in S\right\}$. If $\varphi_g(S)\subseteq E$, then we call $\varphi_g$ a \emph{well-defined edge mapping}, as it can be seen as a function $\varphi_g: S \rightarrow E$.
	\end{definition}

	We start with the following simple observation, which holds since every pair of nodes in a clique has an edge.
	
	\begin{observation}
		\label{observation:well-defined-edge-mapping}
		Let $K_n=(V,E)$ be a clique with $n$ nodes, let $S\subseteq E$ be a subset of the edges, and let $g: V\rightarrow V$ be a bijective node mapping. Then $\varphi_g(S)$ and $\varphi_{g^{-1}}(S)$ are well-defined edge mappings.
	\end{observation}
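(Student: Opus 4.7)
The plan is to unfold the definitions and use the key structural property of a clique: every pair of distinct vertices is joined by an edge. Since the conclusion is symmetric in $g$ and $g^{-1}$ (and $g^{-1}$ is itself a bijection on $V$ whenever $g$ is), it suffices to prove the statement for $\varphi_g$ and then invoke the same argument for $\varphi_{g^{-1}}$.

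The core step is to verify the containment $\varphi_g(S)\subseteq E$. I would pick an arbitrary edge $\{u,v\}\in S$; since $\{u,v\}$ is an edge, $u\neq v$, and because $g$ is a bijection it is in particular injective, so $g(u)\neq g(v)$. At this point the clique hypothesis delivers the conclusion: in $K_n$, every pair of distinct nodes is adjacent, hence $\{g(u),g(v)\}\in E$. Since $\{u,v\}\in S$ was arbitrary, this shows that $\varphi_g(S)\subseteq E$, which by the given definition means that $\varphi_g$ is a well-defined edge mapping $\varphi_g: S \rightarrow E$.

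For the second half, I would note that $g^{-1}: V \rightarrow V$ is also a bijective node mapping (the inverse of a bijection is a bijection), so the identical argument applied to $g^{-1}$ in place of $g$ gives $\varphi_{g^{-1}}(S)\subseteq E$, completing the proof.

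There is no real obstacle here; the observation is essentially a sanity check that the edge-mapping notation will not produce ``phantom'' edges when the underlying graph is a clique, and the only ingredients used are injectivity of $g$ (to preserve distinctness of endpoints) and completeness of $K_n$ (to guarantee that any pair of distinct nodes is an edge). The proof is therefore a one-line verification from the definitions.
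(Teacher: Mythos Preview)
Your proposal is correct and matches the paper's approach exactly: the paper states the observation without a formal proof, noting only that it ``holds since every pair of nodes in a clique has an edge,'' which is precisely the argument you spell out in detail. Your added remark that injectivity of $g$ ensures $g(u)\neq g(v)$ is a useful sanity check the paper leaves implicit.
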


	\begin{restatable}[Triangle edge transversal preserved under node mapping in cliques]{myclaim}{TETPreservation}\label{clm:TET_preservation}
		Let $K_n=\left(V,E\right)$ be a clique with $n$ nodes. Let $S$ be a triangle edge transversal of $K_n$, and let $g: V\rightarrow V$ a bijective node mapping function. Then $\varphi_g\left(S\right)$ is a triangle edge transversal.    
	\end{restatable}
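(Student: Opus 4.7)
The plan is to exploit the fact that in a clique every triple of distinct vertices forms a triangle, so covering all triangles amounts to hitting every $3$-subset of $V$ with an edge in $S$. First I would invoke Observation~\ref{observation:well-defined-edge-mapping} to know that $\varphi_g(S) \subseteq E$, so that it even makes sense to ask whether $\varphi_g(S)$ is a triangle edge transversal.

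Next, I would pick an arbitrary triangle $T = \{a,b,c\}$ in $K_n$ and show that at least one of its three edges lies in $\varphi_g(S)$. The trick is to pull $T$ back through the bijection: let $a' = g^{-1}(a)$, $b' = g^{-1}(b)$, $c' = g^{-1}(c)$. Since $g$ is a bijection these are three distinct vertices, and since $K_n$ is a clique they form a triangle $T'$ in $K_n$. Because $S$ is a triangle edge transversal of $K_n$, at least one edge of $T'$ lies in $S$, say $\{a',b'\} \in S$ without loss of generality.

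Now the definition of the edge mapping gives $\{g(a'), g(b')\} = \{a,b\} \in \varphi_g(S)$, which is an edge of $T$. Since $T$ was arbitrary, $\varphi_g(S)$ covers every triangle of $K_n$, so it is a triangle edge transversal. I expect no real obstacle here: the entire argument rests on the dual use of bijectivity (to pull the triangle back and to push the covering edge forward) together with the fact that in a clique both operations stay inside the edge set, which is exactly the content of the preceding observation.
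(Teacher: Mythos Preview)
Your proposal is correct and follows essentially the same approach as the paper's proof: pull an arbitrary triangle back through $g^{-1}$, use that $S$ covers the preimage triangle, and push the covering edge forward via $\varphi_g$. The only cosmetic difference is that you explicitly invoke Observation~\ref{observation:well-defined-edge-mapping} up front, whereas the paper leaves well-definedness implicit.
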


      \begin{proof}
    	We show that for every triangle in the graph, there is an edge $e\in\varphi_g\left(S\right)$ covering it. Let $T=\left\{u,v,w\right\}$ be a triangle. Since $K_n$ is a clique, the set of nodes $\{g^{-1}\left(u\right),g^{-1}\left(v\right),g^{-1}\left(w\right)\}$ forms a triangle, which we denote by $T'$. Since $S$ is a triangle edge transversal, then one of the edges of $T'$ is in $S$. Without loss of generality, assume that $e'=\left\{g^{-1}\left(u\right),g^{-1}\left(v\right)\right\}$ is in $S$. Thus, applying the mapping function $\varphi_g$ on $e'$ results in an edge $e=\varphi_g\left(e'\right)$ that is in $T$. Since $\varphi_g\left(e'\right)$ is also in $\varphi_g\left(S\right)$, we have that $T$ is covered by $\varphi_g\left(S\right)$.
    \end{proof}
  We thus obtain the following.
	
	\begin{corollary}\label{col:complete_graph_edge_transversal}
		For a clique $K_n$ with $n$ nodes, the following hold:
		\begin{enumerate}
			\item For every two nodes $u,v$, it holds that $\mu\left(K_n,u\right) = \mu\left(K_n,v\right)$. Thus, we can denote this value by $\mu_n$ (where $\mu_n = \mu\left(K_n,v\right)$ for all $v$).
			
			\item For any node $v$ and $\mu_n$ nodes $u_1,u_2,\dots, u_{\mu_n}$, there exists a MTET, which we denote $S$, such that the set of edges incident to $v$ that are in $S$ is exactly $\{\left\{v,u_i\right\}, 1\le i\le \mu_n\}$.
		\end{enumerate}
	\end{corollary}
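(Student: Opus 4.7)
The plan is to deduce both items as direct consequences of Claim~\ref{clm:TET_preservation} (triangle edge transversal preservation under node mappings in cliques), combined with the simple but crucial observation that when the host graph is a clique, $\varphi_g$ is a bijection on $E$ (by Observation~\ref{observation:well-defined-edge-mapping} applied to both $g$ and $g^{-1}$), so $|\varphi_g(S)|=|S|$ and hence $\varphi_g$ preserves MTETs (not merely transversals). The rest is a matter of picking the right bijection $g$ of $V$ in each of the two items.

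For item 1, I would fix two nodes $u,v$ and take $g$ to be the transposition that swaps $u$ and $v$ and fixes every other node. Pick an MTET $S$ attaining $\mu(K_n,v)$, that is, one that contains exactly $\mu(K_n,v)$ edges incident to $v$. By the preceding paragraph, $\varphi_g(S)$ is an MTET. Since $g$ is a bijection with $g(v)=u$, an edge $\{x,y\}\in S$ is incident to $v$ if and only if $\varphi_g(\{x,y\})=\{g(x),g(y)\}$ is incident to $u$. Hence $\varphi_g(S)$ contains exactly $\mu(K_n,v)$ edges incident to $u$, giving $\mu(K_n,u)\ge \mu(K_n,v)$. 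Symmetry yields equality, so the common value $\mu_n$ is well-defined.

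For item 2, I would start with an MTET $S_0$ that realizes $\mu_n$ at $v$, and let $w_1,\dots,w_{\mu_n}$ be the (necessarily distinct) neighbors of $v$ such that $\{v,w_i\}\in S_0$. Given the target nodes $u_1,\dots,u_{\mu_n}$, I would construct a bijection $g:V\to V$ with $g(v)=v$, $g(w_i)=u_i$ for every $i$, and $g$ extended to any bijection on the remaining $n-\mu_n-1$ nodes of $V\setminus\{v,w_1,\dots,w_{\mu_n}\}$ onto $V\setminus\{v,u_1,\dots,u_{\mu_n}\}$; this is possible because both complementary sets have the same cardinality. Then $S=\varphi_g(S_0)$ is an MTET by the preservation argument, and the key check is that the edges of $S$ incident to $v$ are \emph{exactly} $\{\{v,u_i\}:1\le i\le\mu_n\}$.

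The only point that requires care (and would be the main, if mild, obstacle) is this last exactness check: I must rule out that some edge of $S_0$ not touching $v$ gets mapped by $\varphi_g$ to an edge touching $v$. This is precisely why I insist on $g(v)=v$: for $\{x,y\}\in S_0$ with $x,y\neq v$, injectivity of $g$ forces $g(x),g(y)\neq v$, so $\varphi_g(\{x,y\})$ does not touch $v$. Combined with the mapping $\{v,w_i\}\mapsto\{v,u_i\}$, this yields the desired equality and completes the proof.
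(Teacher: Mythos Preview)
Your proof is correct and follows precisely the approach the paper intends: the paper states the corollary immediately after Claim~\ref{clm:TET_preservation} with the phrase ``We thus obtain the following'' and gives no further argument, so you have simply filled in the omitted details in the natural way---using the symmetry of $K_n$ via well-chosen node bijections $g$ and the fact that $\varphi_g$ preserves both the transversal property (Claim~\ref{clm:TET_preservation}) and cardinality (Observation~\ref{observation:well-defined-edge-mapping}), hence preserves MTETs. Your care in item~2 to fix $g(v)=v$ so that no extraneous edge gets mapped onto $v$ is exactly the right point to check.
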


	Next, we show that the maximum number of incident edges to a node in a clique with $n$ nodes, that all are in the same MTET cannot be small, nor can it be equal to $n-1$.

	\begin{restatable}{myclaim}{maxEdgesBounds}\label{clm:max_edges_bounds}
		Let $K_n$ be a clique with $n$ nodes, then $\mu_n\ge\left(n-1\right)/6$ and $\mu_n < n-1$.
	\end{restatable}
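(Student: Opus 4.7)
The plan is to treat the two inequalities separately, using the symmetry afforded by Corollary~\ref{col:complete_graph_edge_transversal} for the lower bound and a direct pruning argument for the upper bound.

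For the lower bound $\mu_n \geq (n-1)/6$, the first step is to lower-bound the size of any MTET $S^* \in \mathcal{ET}(K_n)$. Since every edge of $K_n$ lies in exactly $n-2$ triangles and every one of the $\binom{n}{3}$ triangles must be covered by at least one edge of $S^*$, a standard double-counting gives $|S^*|(n-2) \geq \binom{n}{3}$, hence $|S^*| \geq n(n-1)/6$. The next step is an averaging argument: summing, over all vertices of $K_n$, the number of edges of $S^*$ incident to that vertex yields $2|S^*|$, so there exists a vertex $u$ with at least $2|S^*|/n \geq (n-1)/3$ edges of $S^*$ incident to it. By the definition of $\mu(K_n,u)$, this gives $\mu(K_n,u) \geq (n-1)/3$, and by Corollary~\ref{col:complete_graph_edge_transversal} this value equals $\mu_n$. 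Thus $\mu_n \geq (n-1)/3 \geq (n-1)/6$.

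For the strict upper bound $\mu_n < n-1$, I would argue by contradiction. Suppose $\mu_n = n-1$; then by Corollary~\ref{col:complete_graph_edge_transversal} there is an MTET $S$ and a vertex $v$ such that all $n-1$ edges incident to $v$ belong to $S$. Pick any neighbor $u$ of $v$ and consider $S' = S \setminus \{\{v,u\}\}$. Every triangle that does not contain the edge $\{v,u\}$ is still covered by $S'$, since $S$ covered it with an edge different from $\{v,u\}$. Every triangle of the form $\{v,u,w\}$ contains the edge $\{v,w\}$, which still belongs to $S'$ because all edges incident to $v$ other than $\{v,u\}$ were in $S$ and remain in $S'$. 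Hence $S'$ is a triangle edge transversal strictly smaller than $S$, contradicting the minimality of $S$.

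The main obstacle I anticipate is cleanly deploying Corollary~\ref{col:complete_graph_edge_transversal} to complete the averaging step: the averaging argument by itself only identifies a specific vertex $u$ whose $\mu$-value is large, and one must invoke the uniformity guaranteed by the corollary to conclude the same value for every vertex and thereby justify the transition to $\mu_n$. The upper bound, by comparison, is a short redundancy observation once one exploits that any edge $\{v,w\}$ incident to $v$ already covers every triangle through $v$.
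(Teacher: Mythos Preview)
Your proposal is correct and follows essentially the same approach as the paper: a counting/averaging argument for the lower bound (yours even yields the sharper $(n-1)/3$ via the handshake factor of two, whereas the paper is content with $(n-1)/6$), and the identical edge-removal contradiction for the upper bound. The invocation of Corollary~\ref{col:complete_graph_edge_transversal} to pass from a single high-degree vertex to the uniform value $\mu_n$ is exactly what the paper does implicitly when it writes ``by the definition of $\mu_n$''.
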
	

 	\begin{proof}	
		\sloppy{      
			Note that the number of triangles in a clique is $\binom{n}{3}$ and that each edge in a MTET of a clique covers exactly $n-2$ triangles. Therefore any MTET has at least $\frac{n(n-1)}{6}$ edges. Let $S$ be a MTET. The number of edges in $S$ that are incident to a node $v$ is $(n-1)/6$ on average, thus there is at least one node that has that many edges incident to it that are in $S$. That node sets a lower bound of $(n-1)/6$ on $\mu_n$ (by the definition of $\mu_n$).
   
			To show the upper bound on $\mu_n$, assume towards a contradiction that $\mu_n = n-1$. Then there exists a MTET $S$ and a node $v$ such that $\left\{\left\{u,v\right\}: u\in N\left(v\right)\right\} \subseteq S$. Let $w$ be a neighbor of $v$, and consider the set $S' = S \setminus \left\{\left\{v,w\right\}\right\}$. Note that the edge we removed is not in any of the triangles in the induced clique of $V \setminus \left\{v\right\}$, and thus,  $S'$ covers all the triangles in the induced clique of $V \setminus \left\{v\right\}$. All the other edges incident to $v$ are still in $S'$ and they cover all the triangles that $v$ is a part of. Thus, $S'$ is a triangle edge transversal and $|S'|=|S|-1$, in contradiction to the optimality of $S$.
		}
	\end{proof}
	
	We later incorporate the above properties of $K_n$ in our construction of the lower bound graphs.

	\textbf{2-Party Communication Complexity.}
	In the 2-party communication complexity setting~\cite{KushilevitzN97}, there are two players, Alice and Bob, who are given two inputs $x,y\in \{0,1\}^K$, respectively, and wish to evaluate a function $f:\{0,1\}^K\times\{0,1\}^K\rightarrow \{\texttt{true}, \texttt{false}\}$ on their joint input.
	The communication between Alice and Bob is carried out according to some fixed protocol $\pi$ (which only depends on the function $f$, known to both players). The protocol consists of the players sending bits to each other until the value of $f$ is determined by either of them.
	The maximum number of bits that need to be sent in order to compute $f$ over all possible inputs $x,y$, is called \emph{the communication complexity} of the protocol $\pi$, and is denoted by $CC(\pi)$. The \emph{deterministic communication complexity} for computing $f$, denoted $CC(f)$, is the minimum over $CC(\pi)$  taken over all deterministic protocols $\pi$ that compute $f$. 
	In a randomized protocol, Alice and Bob have access to (private) random strings $r_A, r_B$, respectively, of some arbitrary length, chosen independently according to some probability distribution. The \emph{randomized communication complexity} for computing $f$, denote ${CC}^R(f)$, is the minimum over $CC(\pi)$ taken over all randomized protocols $\pi$ that compute $f$ with success probability at least $2/3$.

	\textbf{Set disjointness.} The \emph{Set Disjointness} function, denoted by $f(x,y)=DISJ_K(x,y)$, whose inputs are of size $K$, returns $\texttt{false}$ if the inputs represent sets that are not disjoint, i.e, if there exists an index $i\in\{0,\dots,K-1\}$ such that $x_i=y_i=1$, and returns $\texttt{true}$ otherwise.
	The deterministic and randomized communication complexities of $DISJ_K$ are $\Omega(K)$ \cite{Bar-YossefJKS04,KalyanasundaramS92,Razborov90}.

	~\\
	\textbf{Lower Bound Graphs.}
	We quote the following formalization of the reduction of optimization problems from 2-party communication complexity to distributed problems in \congest. 
	
	\begin{definition}
		Fix an integer $K$, a function $f:\left\{0,1\right\}^K\times\left\{0,1\right\}^K\rightarrow\left\{\texttt{true},\texttt{false}\right\}$, and a predicate $P$ for graphs. The family of graphs $\{G_{x,y}=\left(V,E_{x,y}\right):x,y\in\left\{0,1\right\}^K\}$, is said to be \emph{a family of lower bound graphs w.r.t $f$ and $P$} if the following properties hold:
		\begin{enumerate}
			\item The set of nodes $V$ is the same for all graphs, and $V = V_A\cup V_B$ is a fixed partition of it;
			\item Only the existence or the weight of edges in $V_A \times V_A$ may depend on $x$;
			\item Only the existence or the weight of edges in $V_B \times V_B$ may depend on $y$;
			\item $G_{x,y}$ satisfies the predicate $P$ iff $f\left(x,y\right)=\texttt{true}$.
		\end{enumerate}
	\end{definition}
	
	\sloppy{
		The following theorem reduces 2-party communication complexity problems to \congest problems (see e.g \cite{AbboudCKP21,DasSarmaHKKNPPW11,DruckerKO13,FrischknechtHW12,HolzerP15}).
	}
	
	\begin{theorem}\label{thm:CC_reduction}
		Fix a function $f:\left\{0,1\right\}^K\times\left\{0,1\right\}^K\rightarrow\left\{\texttt{true},\texttt{false}\right\}$ and a predicate $P$. Let $\{G_{x,y}\}$ be a family  of lower bound graphs w.r.t $f$ and $P$ and denote  $C=E\left(V_A,V_B\right)$. Then any deterministic algorithm for deciding $P$ in the \congest model requires $\Omega\left(CC\left(f\right)/(|C|\log n)\right)$ rounds, and any randomized algorithm for deciding $P$ in the \congest model requires $\Omega\left(CC^R\left(f\right)/(|C|\log n)\right)$ rounds. 
	\end{theorem}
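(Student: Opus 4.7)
The plan is a standard simulation argument in the Peleg--Rubinovich style: given a \congest algorithm $\mathcal{A}$ for deciding $P$, I will convert it into a 2-party communication protocol for $f$ whose total bit complexity is at most the round complexity of $\mathcal{A}$ times the per-round cross-cut bandwidth, and then invoke the communication complexity lower bound for $f$.

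First I would set up what each player knows. By properties (1)--(3) of the lower bound family, both players know the fixed vertex set $V = V_A \cup V_B$ and the fixed cut $C = E(V_A, V_B)$, while Alice can determine all edges inside $V_A$ from her input $x$ and Bob can determine all edges inside $V_B$ from his input $y$. Alice will be responsible for simulating every node in $V_A$ (maintaining its state across rounds of $\mathcal{A}$), and Bob for every node in $V_B$.

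The main step is the round-by-round simulation. In any given round of $\mathcal{A}$, each node's outgoing message on an incident edge is a function of its current local state (and, in the randomized case, its private random bits). Messages sent on edges with both endpoints in $V_A$ are computed by Alice alone, and similarly for $V_B$; they require no communication between the players. The only messages that require coordination are those sent along cut edges $e \in C$: for each such edge the player simulating one endpoint sends the player simulating the other endpoint the $O(\log n)$-bit message transmitted along $e$ in that direction. Summing over $C$ and both directions, each simulated round incurs $O(|C| \log n)$ bits of 2-party communication. After $R$ rounds of $\mathcal{A}$, some node outputs the decision, which by property (4) equals $f(x,y)$; the player simulating that node announces the answer. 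This yields a 2-party protocol for $f$ using $O(R \cdot |C| \log n)$ bits, so $R \cdot |C| \log n = \Omega(CC(f))$, i.e., $R = \Omega(CC(f)/(|C|\log n))$.

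For the randomized statement, the only additional observation is that the private random bits of the simulated nodes can be drawn from $r_A$ (for nodes in $V_A$) and $r_B$ (for nodes in $V_B$) at simulation time, without any additional communication, so the resulting randomized protocol succeeds with the same probability as $\mathcal{A}$ and the identical accounting yields $R = \Omega(CC^R(f)/(|C|\log n))$. I expect no genuine obstacle: the proof is bookkeeping over the standard simulation framework, and the quantitative bound falls out directly once one confirms that internal messages within $V_A$ and within $V_B$ are free while each cut edge contributes $O(\log n)$ bits per round.
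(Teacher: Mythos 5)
Your simulation argument is correct and is exactly the standard Peleg--Rubinovich-style reduction that the paper relies on: the paper does not reprove this theorem but cites it from prior work (e.g., \cite{AbboudCKP21,DasSarmaHKKNPPW11}), and your round-by-round simulation with Alice owning $V_A$, Bob owning $V_B$, and only the $O(|C|\log n)$ bits per round crossing the cut is precisely that argument, including the correct handling of private randomness in the randomized case.
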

	
	\subsubsection{The Lower Bound.}

	We begin by constructing the fixed graph on which we proceed by adding edges corresponding to the input strings $x$ and $y$ of the 2-party set disjointness problem. Then we show that having a solution for MTET of size $K$ implies the disjointness of the input strings and vice versa, hence the complexity of the set disjointness problem provides a lower bound to the complexity of the MTET problem, according to Theorem \ref{thm:CC_reduction}. The construction is illustrated by Figures \ref{fig:lower bound fixed graphs}, \ref{fig:fixed graph bit gadget connections}, \ref{fig:fixed graph connectors connections} and \ref{fig:fixed graph rings connections}.
	
	~\\
	\textbf{The fixed graph construction -- the node set.} 
	\sloppy{
		Let $k$ be a natural number. The fixed graph has two center nodes $a$ and $b$, and four sets of nodes $A_i=\left\{a_i^\ell: 0\le \ell \le \mu_{k+1}\right\}, B_i=\left\{b_i^\ell: 0\le \ell \le \mu_{k+1}\right\}, i \in \left\{1,2\right\}$, which we call \emph{bit nodes}. For each set $S\in \left\{A_1,A_2,B_1,B_2\right\}$, the graph also has four additional corresponding sets of nodes, as follows. Two such sets are \emph{bit-gadgets} (as defined in~\cite{AbboudCK16}  and explained below) of size $\lceil\log\left(\mu_{k+1}+1\right)\rceil$, $T_S=\left\{t_S^\ell: 0\le \ell \le \lceil\log\left(\mu_{k+1}+1\right)\rceil-1\right\}$  and $F_S=\left\{f_S^\ell: 0\le \ell \le \lceil\log\left(\mu_{k+1}+1\right)\rceil-1\right\}$, another set is a clique of size $k$, $C_S=\left\{c_S^\ell: 0 \le \ell \le k-1\right\}$, and another set $H_S = \left\{h_S^\ell: 0 \le \ell \le \mu_{k+1}\right\}$ called \emph{connectors} is of size $\mu_{k+1}+1$ (this set connects between the bit nodes to the corresponding cliques with a line of $2$-triangles as will be explained later and illustrated in Figure \ref{fig:fixed graph connectors connections}). In addition, we add \emph{ring-auxiliary nodes} $M^\ell_i = \{m^j_{i,\ell}, 0\le j \le 13\}$, for $i\in{1,2}$ and $0\le \ell \le \lceil\log\left(\mu_{k+1}+1\right)\rceil-1$, that connect the edges $\{f^\ell_{A_i},a\},\{t^\ell_{A_i},a\},\{f^\ell_{B_i},b\},\{t^\ell_{B_i},b\}$ as part of a 20-ring of triangles, as defined in Definition  \ref{def:ring_of_triangles} (see Figure \ref{fig:fixed graph rings connections}, where the bold edges indicate the four previously mentioned edges for $M^0_1$, that is, for $i=1$ and $\ell=0$).
	}

	~\\\textbf{The fixed graph construction -- the edge set.} 
	For clarity, we divide the edge set to five sets and explain the contents of each set.
	
	\underline{Central edges:} The center node $a$ is connected to all the nodes in the bit-gadgets $F_{A_1}, T_{A_1}, F_{A_2}, T_{A_2}$, the cliques $C_{A_1}, C_{A_2}$ and the connector nodes $H_{A_1}, H_{A_2}$. The center node $b$ is connected to all the nodes in the bit-gadgets $F_{B_1}, T_{B_1}, F_{B_2}, T_{B_2}$, the cliques $C_{B_1}, C_{B_2}$, and the connector nodes $H_{B_1}, H_{B_2}$ (see Figures \ref{fig:lower bound fixed graphs} and \ref{fig:fixed graph connectors connections}).

	\underline{Clique edges:} Each of the sets $C_{A_1},C_{A_2},C_{B_1},C_{B_2}$ is a clique, i.e, is isomorphic to the graph $K_{k}$. Therefore, every two nodes in the same set $C_S$ where $S\in\{A_1,A_2,B_1,B_2\}$ are connected (see Figure \ref{fig:lower bound fixed graphs}).
	
	\underline{Bit edges:}
	\sloppy{
		The sets $T_S,F_S$ behave as bit-gadgets, meaning that the connections between the bit-gadgets and their corresponding bit nodes in $S$ match their binary representation. Formally, for $(s,i)\in\{(a,1), (a,2), (b,1), (b,2)\}$ and $0\le \ell \le \mu_{k+1}+1$, let $s^\ell_i$ be a node in the corresponding set of bit nodes (i.e, in $\{A_1,A_2,B_1,B_2\}$), and let $(\ell)_j$ be the bit value of $j$th bit in the binary representation of $\ell$. Then the \emph{corresponding binary representation nodes} of $s^\ell_i$, denoted $\bin(s^\ell_i)$, are $\bin(s^\ell_i)=\left\{f^j_S | (\ell)_j = 0\right\}\cup \left\{t^j_S | (\ell)_j = 1\right\}$ for the corresponding set $S\in\{A_1,A_2,B_1,B_2\}$ that the node $s^{\ell}_i$ is a part of. 
		Therefore, each bit node $s^\ell_i$ is connected to $\bin(s^\ell_i)$ (see Figure \ref{fig:fixed graph bit gadget connections}).
	}
	
	\underline{Connector edges:}
	For any $(s,i,S)\in\{(a,1,A_1),(a,2,A_2),(b,1,B_1),(b,2,B_2)\}$, each node  $h_S^\ell$ in $H_S$  is connected to the nodes with the corresponding index in $S$ and the clique $C_S$, i.e, to the nodes $s^\ell_i$ and $c^\ell_S$ (see Figure \ref{fig:fixed graph connectors connections}). Notice that the connections from $H_S$ to $S$ and $C_S$ are possible because 
	the size of $S$ equals the size of $H_S$, and because due to the inequality $\mu_{k+1} < k$ that is satisfied according to Claim \ref{clm:max_edges_bounds}, it holds that the number of nodes in $C_S$ is sufficiently large to connect all the nodes in $H_S$ to $C_S$.

	\sloppy{
		\underline{Ring edges:} For $i\in\{1,2\}$, the edges between $a$ and $F_{A_i}, T_{A_i}$ (that is, the edges $\{a, f^\ell_{A_i}\}, \{a, t^\ell_{A_i}\}$) participate in a 20-ring of triangles with the edges between $b$ and $F_{B_i}, T_{B_i}$ (that is, the edges $\{b, f^\ell_{B_i}\}, \{b, t^\ell_{B_i}\}$) with the ring-auxiliary nodes acting as connectors. To be specific, let $i\in\{1,2\}$ and $0\le \ell\le \lceil\log\left(\mu_{k+1}+1\right)\rceil-1$. Then the following edges are part of the set of ring edges: $\{\{a,m^j_{i,\ell}\}\mid j\in\{0,1,12\}\}$, $\{\{b,m^j_{i,\ell}\}\mid j\in\{5,6,7\}\}$, $\{\{t^\ell_{A_i},m^j_{i,\ell}\}\mid j\in\{1,2,3\}\}$, $\{\{f^\ell_{B_i},m^j_{i,\ell}\}\mid j\in\{3,4,5\}\}$, $\{\{t^\ell_{B_i},m^j_{i,\ell}\}\mid j\in\{6,7,8\}\}$, $\{\{f^\ell_{A_i},m^j_{i,\ell}\}\mid j\in\{0,12,13\}\}$, as well as the edges $\{\{m^j_{i,\ell}, m^p_{i,\ell}\}\}$ for any $(j,p)$ that are $(0,1)$, $(2,3)$, $(2,4)$, $(3,4)$, $(5,6)$, $(7,8)$, $(7,9)$, $(8,9)$, $(8,10)$, $(9,10)$, $(9,11)$, $(10,11)$, $(10,13)$, $(11,12)$, $(11,13)$, or $(12,13)$ (see Figure \ref{fig:fixed graph rings connections}). Note that the choice of having exactly 20 triangles per ring is arbitrary and any even number greater than 12 works. The important part is to have an odd number of triangles (at least 3) between the four pairs of edges $(\{f^\ell_{A_j},a\}, \{t^\ell_{A_j},a\}), (\{t^\ell_{A_j},a\}, \{f^\ell_{B_j},b\}), (\{f^\ell_{B_j},b\},\{t^\ell_{B_j},b\})$, and $(\{t^\ell_{B_j},b\}, \{f^\ell_{A_j},a\})$.
	}

	\begin{figure}[!ht]
		\centering
		\includegraphics[width=12cm]{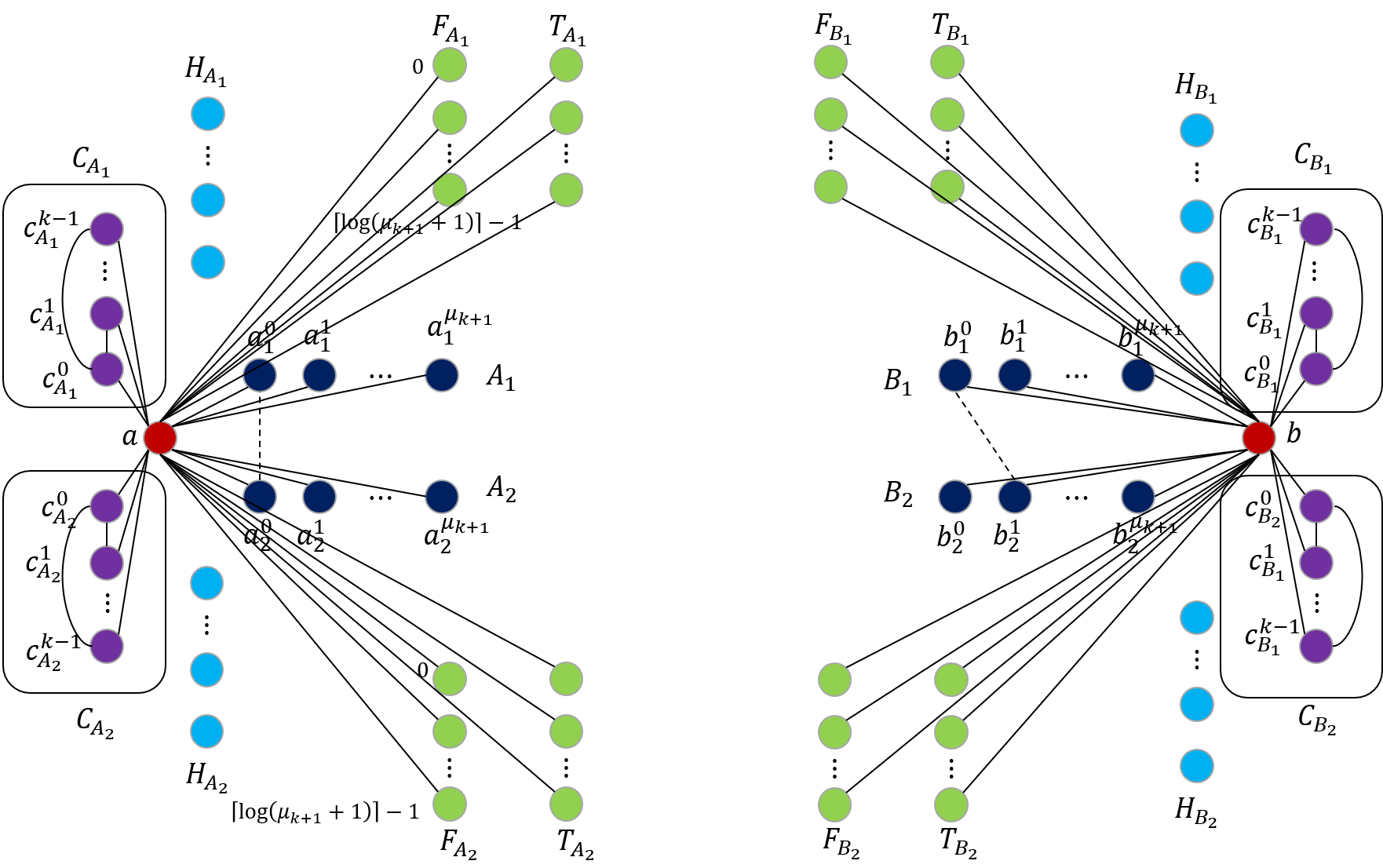}
		\caption{The basic structure of the family of lower bound graphs for deciding the size of the minimum triangle edge transversal, with many edges and nodes
			omitted for clarity. See the additional figures for more detailed illustrations.}
		\label{fig:lower bound fixed graphs}
	\end{figure}
	
	\begin{figure}[!ht]
		\centering
		\includegraphics[width=11cm]{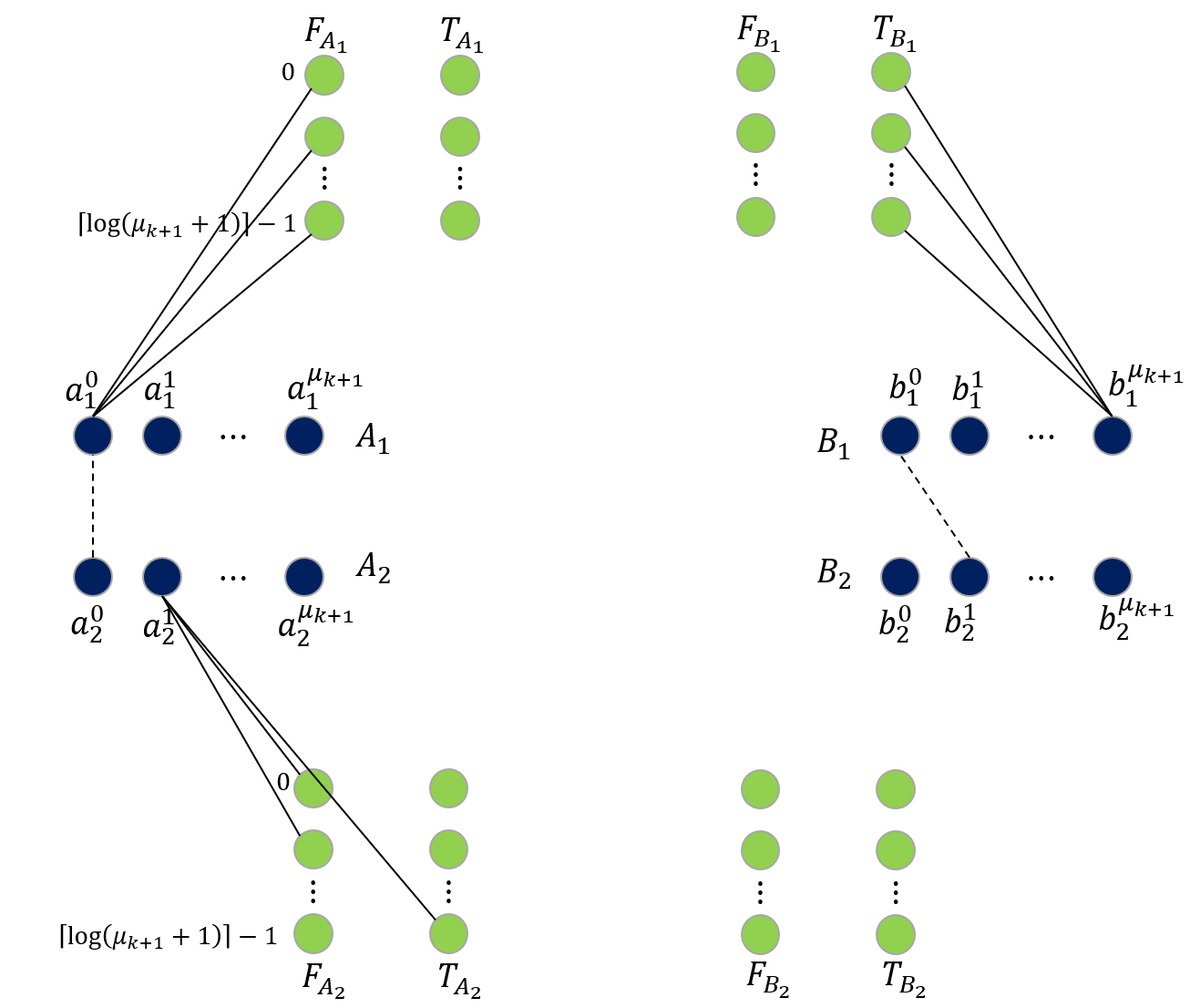}
		\caption{The bit-gadget layer of connections, each bit node is connected to the corresponding nodes in the bit gadget with respect to the binary representation of that node.}
		\label{fig:fixed graph bit gadget connections}
	\end{figure}
	
	\begin{figure}[!ht]
		\centering
		\includegraphics[width=12cm]{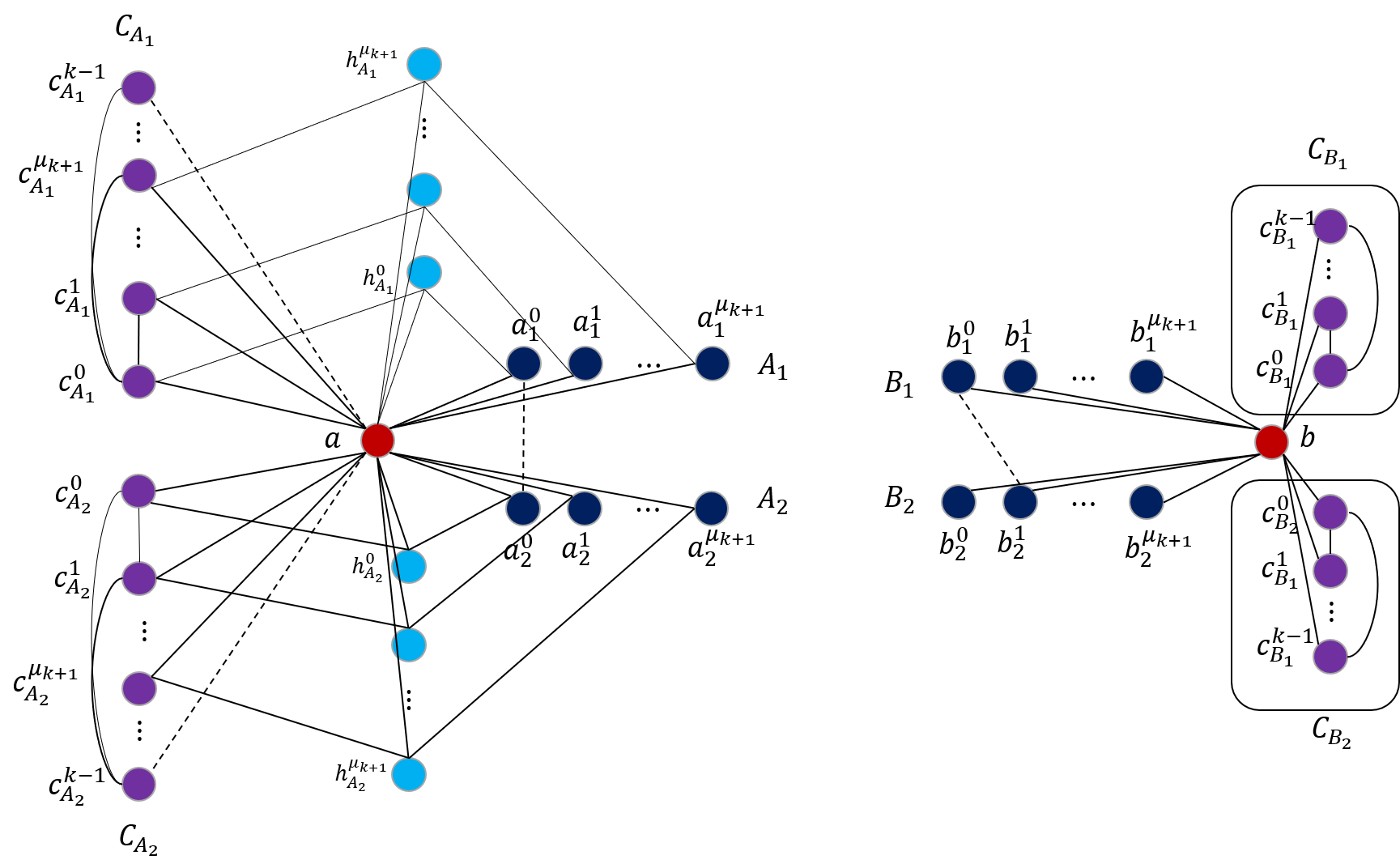}
		\caption{The clique layer of connections (Connector edges), each edge between a bit node and a center node (e.g., $\{a,a^i_1\}$) and the edge between a clique node with the same index and the center node (i.e., $\{a,c^i_{A_1}\}$)  are connected via a line of 2-triangles ($\{a,h^i_{A_1},c^i_{A_1}\}$ and $\{a,h^i_{A_1},a^i_1\}$).}
		\label{fig:fixed graph connectors connections}
	\end{figure}

	\begin{figure}[!ht]
		\centering
		\includegraphics[width=14cm]{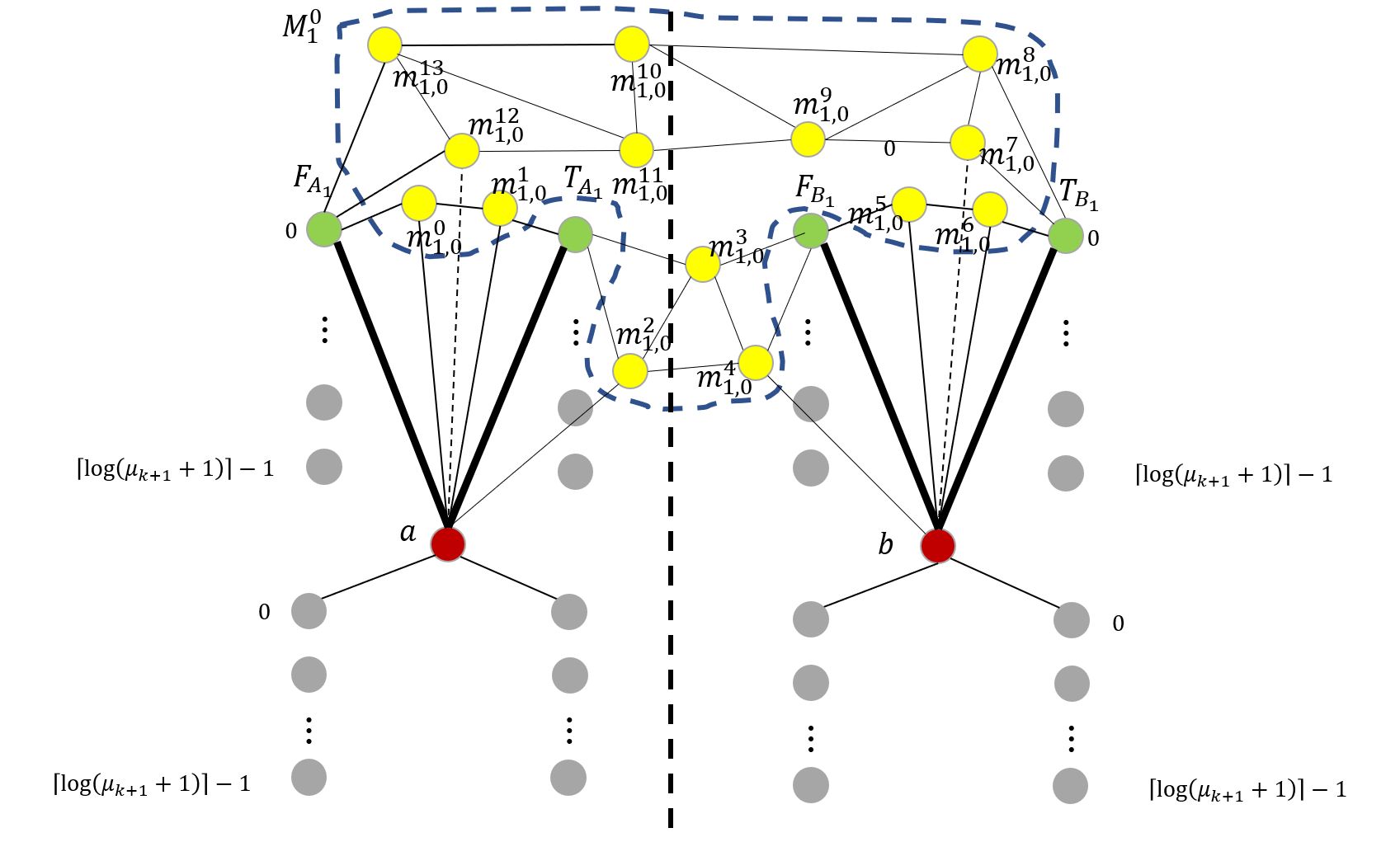}
		\caption{This figure shows the connection between the bit-gadgets. Each tuple of the four edges $S_i^j=\left(\left\{a,f^i_{A_j}\right\}, \left\{a,t^i_{A_j}\right\}, \left\{b,f^i_{B_j}\right\}, \left\{b,t^i_{B_j}\right\}\right)$ for $0 \le i \le \lceil\log\left(\mu_{k+1}+1\right)\rceil+1$ and $j\in \{1,2\}$ are connected with each other via a 20-ring of triangles, such that between each two consecutive edges of $S_i^j$ (cyclic) there is an odd number of triangles. For example, between $\{a,f^i_{A_j}\}$ and $\{a,t^i_{A_j}\}$ there are three triangles and between $\{b,t^i_{B_j}\}$ and $\{a,f^i_{A_j}\}$ there are 9 triangles.}
		\label{fig:fixed graph rings connections}
	\end{figure}

	This concludes the construction of the fixed graph. We next analyze its required properties, describe the edges that are added according to the set disjointness inputs strings $x,y$, and complete the lower bound proof.
	We denote the size of the MTET of a graph by $\tau(G)$.
	\begin{myclaim}\label{clm:transversal lower bound partition}
		Any triangle edge transversal of the fixed graph $G$ must contain $\tau\left(K_{k+1}\right)$ edges of each clique, 10 edges of each 20-ring of triangles (Figure \ref{fig:fixed graph rings connections}), and an edge from each triangle $\left\{a_i^j, h_{A_i}^j ,a\right\}$ and $\left\{b_i^j, h_{B_i}^j,b\right\}$, for $i\in \left\{1,2\right\}, 0 \le j \le \mu_{k+1}$.
	\end{myclaim}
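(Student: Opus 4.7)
I will prove the three requirements independently by identifying \emph{self-contained} substructures of $G$---the four cliques, the individual $20$-rings, and the individual connector triangles---and arguing that any triangle edge transversal of $G$, restricted to such a substructure, is itself a triangle edge transversal of that substructure. The MTET bounds for cliques (by definition of $\tau$) and for rings of triangles (Claim~\ref{clm:ring_of_triangles_opt}) then yield the required counts. For the cliques, for each $S\in\{A_1,A_2,B_1,B_2\}$ with incident center $s\in\{a,b\}$, the set $C_S\cup\{s\}$ induces a copy of $K_{k+1}$, since $C_S$ is a clique of size $k$ by the clique-edges construction and $s$ is adjacent to every node of $C_S$ by the central-edges construction. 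Every triangle on three vertices of $C_S\cup\{s\}$ has all three of its edges inside this $K_{k+1}$, so for any triangle edge transversal $T$ of $G$ the restriction $T\cap E(K_{k+1})$ is already a transversal of $K_{k+1}$ and hence has size at least $\tau(K_{k+1})$.

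For the $20$-rings, for each $(i,\ell)\in\{1,2\}\times\{0,\dots,\lceil\log(\mu_{k+1}+1)\rceil-1\}$ the four central edges $\{a,f^\ell_{A_i}\}$, $\{a,t^\ell_{A_i}\}$, $\{b,f^\ell_{B_i}\}$, $\{b,t^\ell_{B_i}\}$ together with the ring edges explicitly listed in the ring-edges construction form a $20$-ring of triangles in the sense of Definition~\ref{def:ring_of_triangles}, and each of its $20$ constituent triangles has all three of its edges within this set. Hence the restriction of $T$ to this set is a triangle edge transversal of the ring and, by Claim~\ref{clm:ring_of_triangles_opt}, has size at least $10$. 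For each connector triangle $\{a_i^j,h_{A_i}^j,a\}$ (and similarly $\{b_i^j,h_{B_i}^j,b\}$), the triple is a triangle of $G$ by the connector-edges construction, so $T$ must contain at least one of its edges by the very definition of a triangle edge transversal.

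The main obstacle is the structural verification for the $20$-rings: one must confirm that no triangle of a given ring uses an edge lying outside the ring's declared edge set, since any such external edge could otherwise ``cover'' a ring triangle from the outside and weaken the local lower bound of $10$. This reduces to a careful parse of the explicit ring-edge enumeration against the cyclic triangle structure, i.e.\ checking that each of the $20$ triangles listed implicitly by Definition~\ref{def:ring_of_triangles} has all three of its edges among those declared. The clique case follows immediately from induced-ness, and the connector case is immediate from the construction, so the whole argument ultimately hinges on this edge-inventory check for the rings together with the already-proven MTET bounds for $K_{k+1}$ and for $t$-rings.
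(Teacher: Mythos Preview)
Your approach is correct and is essentially the same as the paper's. The paper abstracts your restriction argument into a single general lemma---for any edge subset $E'\subseteq E$, the set $S\cap E'$ is a triangle edge transversal of the subgraph $(V,E')$, so $|S\cap E'|\ge\tau(V,E')$---and then applies it to exactly the three substructures you identify (the four $K_{k+1}$'s on $C_S\cup\{s\}$, the $20$-rings, and the connector triangles); your case-by-case argument unpacks this same logic directly, with the extra (harmless) care of explicitly checking that the declared ring edges indeed form a $20$-ring.
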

	
	Instead of proving the claim directly, we prove a more generalized claim, which implies it.
	
	\begin{myclaim}[Sub-optimality of parts of a triangle edge transversal]\label{clm:sub_optimal_parts}
		Let $E'\subseteq E$ be a subset of edges of $G$. Let $H=\left(V,E'\right)$ be the subgraph of $G$ with only the edges from $E'$. Then every triangle edge transversal $S$ of G contains at least $\tau\left(H\right)$ edges of $E'$.
	\end{myclaim}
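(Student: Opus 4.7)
The plan is to exhibit $S\cap E'$ as a triangle edge transversal of the subgraph $H$, so that its cardinality is at least $\tau(H)$ by definition of $\tau$. Concretely, I would set $S' \defeq S\cap E'$ and aim to show that $S'$ covers every triangle of $H$; this immediately yields $|S\cap E'| = |S'| \geq \tau(H)$, giving the claim.

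To verify that $S'$ is a triangle edge transversal of $H$, I would fix an arbitrary triangle $T$ in $H$. Since $H$ has the same vertex set as $G$ and $E' \subseteq E$, the triangle $T$ is also a triangle of $G$, and moreover all three of its edges lie in $E'$. Because $S$ is a triangle edge transversal of $G$, at least one edge $e$ of $T$ belongs to $S$. But $e$ is an edge of $T$, so $e\in E'$, and therefore $e\in S\cap E' = S'$. This shows $S'$ covers every triangle of $H$, so $|S'|\geq \tau(H)$ as required.

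The main thing to be careful about is that $H$ is not an induced subgraph — it is the spanning subgraph of $G$ on the edge set $E'$. The argument above only uses that (i) every triangle of $H$ is a triangle of $G$, which holds since $E'\subseteq E$, and (ii) the three edges of any triangle of $H$ lie in $E'$, which is immediate from the definition of $H$. Both properties are trivially true, so there is no real obstacle; the claim reduces to the one-line observation that any edge of $S$ covering a triangle of $H$ must already be an edge of $E'$.

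Finally, to derive the original Claim~\ref{clm:transversal lower bound partition}, I would instantiate this general claim separately with $E'$ equal to the edge set of each clique $C_S$, each $20$-ring of triangles, and each individual triangle of the form $\{a_i^j, h_{A_i}^j, a\}$ or $\{b_i^j, h_{B_i}^j, b\}$. For cliques $K_{k+1}$ this gives $\tau(K_{k+1})$ edges per clique; for each $20$-ring of triangles Claim~\ref{clm:ring_of_triangles_opt} gives $\tau=10$; and for each individual triangle the required lower bound of $1$ is trivial. Summing these disjoint contributions yields the claim.
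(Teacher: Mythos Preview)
Your proposal is correct and follows essentially the same argument as the paper: both show that $S\cap E'$ is a triangle edge transversal of $H$ (since any triangle of $H$ is a triangle of $G$ with all edges in $E'$, hence the covering edge from $S$ lies in $S\cap E'$), and conclude $|S\cap E'|\ge\tau(H)$. Your final paragraph on instantiating the claim for the cliques, rings, and connector triangles also matches how the paper proceeds via Claim~\ref{clm:tria_edge_lower_bound} and Corollary~\ref{col:fixed_graph_MTET_lower_bound}.
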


	\begin{proof}
		The set $S\cap E'$ is a triangle edge transversal of $H$, since all the triangles that are fully contained in $H$ can be covered using only edges of $S\cap E'$. Thus, $|S\cap E'| \ge \tau\left(H\right)$. Note that $S\cap E' \subseteq E'$ and $S\cap E' \subseteq S$, thus, $S$ contains at least $|S\cap E'| \ge \tau\left(H\right)$ edges of $E'$.
	\end{proof}
	
	Having multiple edge-disjoint subsets of the edges gives a stronger lower bound on the size of a triangle edge transversal.
	
	\begin{myclaim}\label{clm:tria_edge_lower_bound}
		Let $E_1,\dots,E_m$ be disjoint subsets of $E$, and for every $1\leq i\leq m$ let $H_i=\left(V,E_i\right)$ be the subgraph of $G$ with only the edges of $E_i$. Then every triangle edge transversal $S$ contains at least $\tau\left(H_i\right)$ edges of $E_i$ and hence $|S| \ge \sum_{i=1}^m |\tau\left(H_i\right)|$.
	\end{myclaim}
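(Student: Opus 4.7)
The plan is to derive this directly from the previous claim (Claim~\ref{clm:sub_optimal_parts}) applied to each $E_i$ separately, and then to exploit the pairwise disjointness of the $E_i$ to turn the per-set lower bounds into an additive global lower bound.

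Concretely, I would fix an arbitrary triangle edge transversal $S$ of $G$, and for each $1 \le i \le m$ apply Claim~\ref{clm:sub_optimal_parts} with $E' = E_i$ and $H = H_i$, which immediately yields that $|S \cap E_i| \ge \tau(H_i)$. This gives the first part of the statement. For the second part, I would observe that because the sets $E_1, \dots, E_m$ are pairwise disjoint, so are the intersections $S \cap E_1, \dots, S \cap E_m$, all of which are subsets of $S$. Hence
\[
|S| \;\ge\; \Bigl|\bigcup_{i=1}^{m}\bigl(S \cap E_i\bigr)\Bigr| \;=\; \sum_{i=1}^{m} |S \cap E_i| \;\ge\; \sum_{i=1}^{m} \tau(H_i),
\]
which is exactly the claimed inequality.

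There is no real obstacle here: the claim is a direct corollary of Claim~\ref{clm:sub_optimal_parts} together with the disjointness hypothesis on the $E_i$. The only point worth emphasizing is that disjointness is essential, since otherwise an edge of $S$ lying in several $E_i$ would be counted multiple times on the right-hand side and the additive bound could fail. In our later application, the sets $E_i$ will correspond to the edge sets of the cliques $C_S$, the $20$-rings of triangles, and the connector lines of $2$-triangles, and these are pairwise edge-disjoint by construction, so this additive form will be exactly what is needed to derive Claim~\ref{clm:transversal lower bound partition}.
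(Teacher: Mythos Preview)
Your proposal is correct and follows essentially the same approach as the paper: apply Claim~\ref{clm:sub_optimal_parts} to each $E_i$ to get $|S\cap E_i|\ge \tau(H_i)$, then sum using disjointness of the $E_i$. The paper's proof is terser but identical in substance.
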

	
	\begin{proof}
		The first part of the claim follows directly from Claim \ref{clm:sub_optimal_parts}. The second part follows from the disjointness of the subsets.
	\end{proof}

	\begin{restatable}[Corollary of Claims \ref{clm:transversal lower bound partition} and \ref{clm:tria_edge_lower_bound}]{corollary}{fixedGraphMTETLowerBound}
		\label{col:fixed_graph_MTET_lower_bound}
		The size of any triangle edge transversal of the fixed graph $G$ is at least $4\tau\left(K_{k+1}\right)+4\left(\mu_{k+1}+1\right)+2\cdot10\lceil\log\left(\mu_{k+1}+1\right)\rceil$.
	\end{restatable}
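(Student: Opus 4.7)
The plan is to derive the corollary directly from Claim~\ref{clm:tria_edge_lower_bound}, by exhibiting three pairwise edge-disjoint edge subsets $E_1,E_2,E_3$ of $G$ whose induced subgraphs $H_1,H_2,H_3$ have $\tau$-values that sum to the claimed bound. First, I would take $E_1$ to be the edge set of the four subgraphs induced on $\{a\}\cup C_{A_1}$, $\{a\}\cup C_{A_2}$, $\{b\}\cup C_{B_1}$, $\{b\}\cup C_{B_2}$. Each such subgraph is a $K_{k+1}$, since $C_S$ is a $k$-clique and the central edges connect the center to every $c_S^\ell$. The four cliques share only the vertices $a$ and $b$ and no edges, so the induced subgraph $H_1$ is a disjoint union of four copies of $K_{k+1}$, giving $\tau(H_1)=4\tau(K_{k+1})$.

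Next, I would take $E_2$ to be the edges of the triangles $\{s_i^j, h_S^j, s\}$ for $(s,i,S)\in\{(a,1,A_1),(a,2,A_2),(b,1,B_1),(b,2,B_2)\}$ and $0\le j\le \mu_{k+1}$. Each such triangle exists by construction, using the central edge $\{s,h_S^j\}$, the connector edge $\{s_i^j,h_S^j\}$, and the central edge $\{s,s_i^j\}$. Triangles for distinct $(s,i,j)$ share at most the center vertex and are pairwise edge-disjoint, so $\tau(H_2)=4(\mu_{k+1}+1)$ (each triangle must be covered separately). None of these edges appear in $E_1$, since their non-center endpoints are bit nodes or connectors rather than clique nodes.

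Finally, I would take $E_3$ to be the edges of the $20$-rings of triangles attached to the ring-auxiliary sets $M_i^\ell$, one for each $i\in\{1,2\}$ and $0\le\ell\le\lceil\log(\mu_{k+1}+1)\rceil-1$, i.e.\ $2\lceil\log(\mu_{k+1}+1)\rceil$ rings total. By Claim~\ref{clm:ring_of_triangles_opt}, each $20$-ring contributes $\tau=10$, and distinct rings use disjoint auxiliary vertex sets $M_i^\ell$ and act on distinct bit-gadget nodes, so they are pairwise edge-disjoint, yielding $\tau(H_3)=20\lceil\log(\mu_{k+1}+1)\rceil$. The rings are also edge-disjoint from $E_1\cup E_2$: their edges touch only the centers, the ring-auxiliary nodes $M_i^\ell$, and the bit-gadget nodes $F_S\cup T_S$, whereas $E_1$ stays inside $\{s\}\cup C_S$ and $E_2$ uses only connector and bit-node endpoints on its non-center side.

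Applying Claim~\ref{clm:tria_edge_lower_bound} to the disjoint triple $(E_1,E_2,E_3)$ yields the stated lower bound $4\tau(K_{k+1})+4(\mu_{k+1}+1)+2\cdot 10\lceil\log(\mu_{k+1}+1)\rceil$. The only genuine obstacle is the cross-family edge-disjointness bookkeeping — in particular, verifying that the central-edge endpoints $\{a,h_S^j\},\{a,s_i^j\}$ used by $E_2$ and $\{a,m_{i,\ell}^j\},\{a,f_S^\ell\},\{a,t_S^\ell\}$ used by $E_3$ never coincide, and that clique edges never coincide with either. This holds by inspection since the node classes $C_S$, $H_S$, $A_i\cup B_i$, $F_S\cup T_S$, and $M_i^\ell$ are all disjoint by construction.
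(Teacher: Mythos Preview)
Your proposal is correct and follows essentially the same approach as the paper: you exhibit the same three pairwise edge-disjoint families (the four $K_{k+1}$'s on $\{s\}\cup C_S$, the $4(\mu_{k+1}+1)$ connector triangles $\{s_i^j,h_S^j,s\}$, and the $2\lceil\log(\mu_{k+1}+1)\rceil$ rings) and apply Claim~\ref{clm:tria_edge_lower_bound}. Your treatment is in fact slightly more explicit than the paper's about the cross-family edge-disjointness bookkeeping, which the paper largely leaves implicit.
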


 \begin{proof}
	Consider the following disjoint subsets of edges.
	\begin{itemize}
		\item The edges between each clique $C_S$ to the center node, i.e.,
		the edges between all nodes in $C_S\cup \{s\}$ for $(s,S)\in\{(a,A_1),(a,A_2),(b,B_1),(b,B_2)\}$
		\item The edges of the triangles that touch the connectors $H_S, S\in\{A_1,A_2,B_1,B_2\}$, i.e, the triangles $\{s^j_i, h^j_{S_i}, s\}$ for $i\in\{1,2\}, 0\le j\le \mu_{k+1}, (s,S)\in\{(a,A),(b,B)\}$ (Figure \ref{fig:fixed graph connectors connections})
		\item The edges of each 20-ring of triangles (Figure \ref{fig:fixed graph rings connections})
	\end{itemize}
	The size of the MTET on the subgraph with only the edges of each of the four cliques in item (1) is $\tau(K_{k+1})$. The size of the MTET on each triangle in item (2) is 1. The size of the MTET on each 20-ring of triangles is 10 according to Claim \ref{clm:ring_of_triangles_opt}. This concludes the analysis of the required size for any solution.
\end{proof}

\begin{restatable}{myclaim}{bitCorrespondence}
	\label{clm:bit_correspondence}
	If $S\subseteq E$ is a triangle edge transversal of $G$ of size $4\tau\left(K_{k+1}\right)+4\left(\mu_{k+1}+1\right)+2\cdot10\lceil\log\left(\mu_{k+1}+1\right)\rceil$, then there are two indices $i,j \in \left\{0, \dots,\mu_{k+1}\right\}$ such that the edges $\left\{a,a_1^i\right\}, \left\{a,a_2^j\right\}, \left\{b,b_1^i\right\}, \left\{b,b_2^j\right\}$ are not in $S$.
\end{restatable}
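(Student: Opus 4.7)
The whole proof rests on a \emph{tightness} observation: since $|S|$ equals the sum of the lower bounds from the disjoint edge-subsets used in Corollary~\ref{col:fixed_graph_MTET_lower_bound} (the four $K_{k+1}$ cliques $C_S\cup\{\text{center}\}$, the four families of bit-side connector triangles $\{s^j_i,h^j_S,s\}$, and the $2\lceil\log(\mu_{k+1}+1)\rceil$ $20$-rings of triangles), $S$ must be an MTET of each subset \emph{separately} and contain \emph{no} edges outside their union. In particular, no ``bit edge'' $\{z,a_1^p\}$ connecting a bit node to a bit-gadget node (and similarly on the other three sides), and no ``clique-to-connector edge'' $\{c^j_{A_1},h^j_{A_1}\}$ (and analogously), is in $S$. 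I will invoke these negative facts throughout.

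\textbf{Step 1 (ring dichotomy).} By Claim~\ref{clm:ring_of_triangles_opt}, each $20$-ring admits exactly two MTETs. The construction places an odd number of triangles between the four cyclically consecutive ``special'' shared edges $\{a,f^\ell_{A_i}\},\{a,t^\ell_{A_i}\},\{b,f^\ell_{B_i}\},\{b,t^\ell_{B_i}\}$, which makes these two MTETs correspond to either ``both $f$-edges in $S$, both $t$-edges out'' (config $I$) or the symmetric opposite (config $II$). Reading the configs of the rings $(1,\ell)$ over $\ell$ yields a single bit-string $p^*_1$ of length $\lceil\log(\mu_{k+1}+1)\rceil$, and because a single ring couples the $A_1$- and $B_1$-sides, the \emph{same} bit-string is simultaneously encoded on both sides.

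\textbf{Step 2 (existence of a missing $\{a,a_1^j\}$).} For every $j\in\{0,\dots,\mu_{k+1}\}$ the triangle $\{a,c^j_{A_1},h^j_{A_1}\}$ must be covered; since $\{c^j_{A_1},h^j_{A_1}\}\notin S$, the covering edge is $\{a,c^j_{A_1}\}$ or $\{a,h^j_{A_1}\}$. Claim~\ref{clm:max_edges_bounds} applied to the clique $C_{A_1}\cup\{a\}$ gives at most $\mu_{k+1}$ edges $\{a,c^\ell_{A_1}\}$ in $S$, so for at least one index $j$ we must have $\{a,h^j_{A_1}\}\in S$. Because the bit-side connector triangle $\{a,a_1^j,h^j_{A_1}\}$ contributes exactly one edge to $S$, this forces $\{a,a_1^j\}\notin S$ at that $j$.

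\textbf{Step 3 (unique identification and $B$-side synchronization).} Fix any $j$ with $\{a,a_1^j\}\notin S$. For each $z\in\bin(a_1^j)$ the triangle $\{a,z,a_1^j\}$ must be covered, and since $\{a,a_1^j\},\{z,a_1^j\}\notin S$, we must have $\{a,z\}\in S$. Reading $z$ bit-by-bit and comparing with the ring dichotomy forces $j=p^*_1$. Hence the $j$ is unique, equals $p^*_1$, and in particular $p^*_1\le\mu_{k+1}$. The same argument on $B_1$ yields a unique $j'\le\mu_{k+1}$ with $\{b,b_1^{j'}\}\notin S$; the shared ring configs force $j'=p^*_1$. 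Set $i=p^*_1$, and obtain $j$ identically from the rings $(2,\ell)$ and the $A_2,B_2$-side. The chief technical difficulty is Step~1: one must verify that the arc-length parities in the $20$-ring produce exactly the paired $I/II$ configurations, since this pairing is what synchronizes the two sides of the cut and makes the \emph{same} index appear on each.
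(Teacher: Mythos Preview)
Your proposal is correct and follows essentially the same approach as the paper. Both arguments rest on the tightness observation (so $S$ is optimal on each disjoint subset and contains no other edges), use Claim~\ref{clm:max_edges_bounds} on the clique $C_{A_1}\cup\{a\}$ to force some $\{a,h^j_{A_1}\}\in S$ and hence $\{a,a_1^j\}\notin S$, then use the bit-gadget triangles $\{a,z,a_1^j\}$ (with $\{z,a_1^j\}\notin S$) to force $\{a,z\}\in S$ for all $z\in\bin(a_1^j)$, and finally use the ring parity to synchronize the $A$- and $B$-sides. The only difference is organizational: you front-load the ring dichotomy to define $p^*_1$ and then identify $j=p^*_1$ directly, whereas the paper finds $i,i'$ first and argues by contradiction that $i\ne i'$ would violate the ring parity; these are equivalent renderings of the same argument.
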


\begin{proof}
	If the size of the triangle edge transversal $S$ is $4\tau\left(K_{k+1}\right)+4\left(\mu_{k+1}+1\right)+2\cdot10\lceil\log\left(\mu_{k+1}+1\right)\rceil$, then it includes the minimum number of edges needed to cover the triangles in each of the subgraphs mentioned in the proof of Corollary \ref{col:fixed_graph_MTET_lower_bound}, and those are the only edges in $S$. Therefore, the number of edges that connects $a$ to the clique $C_{A_1}$ and are in $S$ is at most $\mu_{k+1}$ by Claim \ref{clm:max_edges_bounds}. Hence, there is at least one edge in the subset $\left\{\left\{a, c_{A_1}^\ell\right\}: 0\le \ell\le \mu_{k+1}\right\}$ that is not is $S$, as this subset is of size $\mu_{k+1}+1$, and let us denote by $i$ the index of the node in the clique that corresponds to this edge (i.e, it is the node $c_{A_1}^i$). To cover the triangle $\{c_{A_1}^i, h_{A_1}^i,a\}$, if we take $\{a, h^i_{A_1}\}$ then it increases the size of $S$, since this edge is not a part of any of the subgraphs mentioned in the proof of Corollary, thus violating the assumption. This forces the edge $\{a, h_{A_1}^i\}$ to be a part of $S$ (see Figure \ref{fig:fixed graph connectors connections}), and thus $\{a, a_1^i\}$ cannot be in $S$ in order to avoid increasing its size.
	
	\sloppy{
		The same explanation applies to the edges $\left\{\left\{a, c_{A_2}^\ell\right\}: 0\le \ell\le \mu_{k+1}\right\}$, $\left\{\left\{b, c_{B_1}^\ell\right\}: 0\le \ell\le \mu_{k+1}\right\}$ and  $\left\{\left\{b, c_{B_2}^\ell\right\}: 0\le \ell\le \mu_{k+1}\right\}$, in that exactly one edge of each subset is not part of $S$. Let us denote those edges by $\{a, c_{A_2}^j\}, \{b, c_{B_1}^{i'}\}, \{b, c_{B_2}^{j'}\}$. Since these edge are not part of $S$, the edges $\{a, h_{A_2}^i\}$, $\{a, h_{B_1}^i\}$, $\{a, h_{B_2}^i\}$ must be part of $S$, thus omitting the edges $\{a, a_2^j\}$, $\{b, b_1^{i'}\}$, $\{b, b_2^{j'}\}$.
	}
	
	Let us assume that $i\ne i'$, thus, there exists a bit in a specific index $\ell$ in their binary representation where $i$ and $i'$ differ. Since the edges $\{a, a_1^i\}, \{b,b_1^{i'}\}$ are not in $S$, then all the edges between $a$ and $\bin\left(a_1^{i}\right)$ and the edges between $b$ and $\bin\left(b_2^{i'}\right)$ are part of $S$ (recall that $\bin\left(a_1^{i}\right)$ are the nodes in the bit gadget corresponding to the binary representation of $a_1^{i}$). Let us assume without loss of generality that the $\ell$-th bits of the binary representations of $a_1^{i}$ and $b_1^{i'}$ are 1 and 0,  respectively. Note that at the index $\ell$ where $i$ and $i'$ differ in their binary representation, the edges $\{a, t_{A_1}^\ell\}, \{b, f_{B_1}^\ell\}$ were added to $S$, however, there is no optimal solution for the $\ell$-th ring that includes both of those edges due to Claim \ref{clm:ring_of_triangles_opt}, as those edges have different parity (see Figures \ref{fig:ring_of_triangles} and \ref{fig:fixed graph rings connections}). This contradicts that each ring of triangles contributes exactly 10 edges to $S$ (which is the optimal solution to cover the $\ell-th$ ring as well). Thus, $i=i'$. A similar argument gives that $j=j'$, completing the proof.
\end{proof}

\textbf{Adding edges corresponding the strings $x$ and $y$.}
Given two strings $x, y \in \{0,1\}^{{\left(\mu_{k+1}+1\right)}^2}$, we modify the graph $G$ accordingly to obtain a graph $G_{x,y}$.
Assuming that the strings are indexed by a pair of two indices of the form $\left(i,j\right), 0\le i,j \le \mu_{k+1}$, we update the graph $G$ by adding the edges $\{{a_1}^i, {a_2}^j\}$ if $x_{ij}=0$ and the edges $\{{b_1}^i, {b_2}^j\}$ if $y_{ij}=0$.

Before proving the next lemma, implying equivalence between Set Disjointness and MTET in our lower bound graphs, we first provide some intuition. We partition the nodes to two parts, one on Alice's side, informally all sets of nodes with $A_1$ or $A_2$ in their name, and the other nodes on Bob's side (see figure \ref{fig:lower bound fixed graphs}). The structure of the graph forces Alice to choose $\lceil\log\left(\mu_{k+1}+1\right)\rceil$ bit-edges, and thus forces Bob to choose the same mirrored bit-edges on his side due to the 20-ring of triangles connection between those bit-edges (see Figure \ref{fig:fixed graph rings connections}). The chosen bit-edges correspond to the binary representations of the nodes $a_1^i, a_2^j, b_1^i, b_2^j$, therefore omitting the unnecessary edges $\{a, a_1^i\}, \{a, a_2^j\}, \{b, b_1^i\}, \{b, b_2^j\}$ from the MTET iff the edges $\{a_1^i,a_2^j\}, \{b_1^i, b_2^j\}$ don't exist, which implies the disjointness of the inputs. If Alice and Bob chose bit edges with corresponding binary representations that do not match, then the solution size will increase.
	The connections to the other cliques $C_{A_1}, C_{A_2}, C_{B_1}, C_{B_2}$ (See Figure \ref{fig:fixed graph connectors connections}) are necessary to prevent other bit edges from being chosen (they correspond to exactly one binary representation).

\begin{lemma}\label{lem:lower_bound_graph_disj}
	The graph $G_{x,y}$ has a triangle edges-transversal of cardinality $M=4\tau\left(K_{k+1}\right)+4\left(\mu_{k+1}+1\right)+2\cdot10\lceil\log\left(\mu_{k+1}+1\right)\rceil$ iff DISJ$\left(x,y\right)=\texttt{false}$.
\end{lemma}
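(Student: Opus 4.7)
The plan is to establish the biconditional by proving both implications separately, using Corollary~\ref{col:fixed_graph_MTET_lower_bound} for the lower bound on any triangle edge transversal size, Claim~\ref{clm:bit_correspondence} to extract the distinguished indices $i,j$ in any size-$M$ transversal, and the parity structure of the $20$-rings from Claim~\ref{clm:ring_of_triangles_opt} to enforce consistency across Alice's and Bob's sides.

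For the ($\Leftarrow$) direction, assuming $\mathrm{DISJ}(x,y)=\texttt{false}$, I would pick indices $i,j$ with $x_{ij}=y_{ij}=1$, so that neither $\{a_1^i,a_2^j\}$ nor $\{b_1^i,b_2^j\}$ appears in $G_{x,y}$. I would then construct a transversal $S$ piece-by-piece matching each summand of $M$: for each clique $\{s\}\cup C_{S'}\cong K_{k+1}$, use Corollary~\ref{col:complete_graph_edge_transversal}(2) to take an MTET whose unique omitted $s$-incident edge is $\{s,c_{S'}^{i^*}\}$ for the appropriate index $i^*\in\{i,j\}$; for each index $\ell\ne i^*$ add $\{s,s_r^\ell\}$ to cover the bit-node connector triangle $\{s,h_{S'}^\ell,s_r^\ell\}$ (the companion triangle $\{s,h_{S'}^\ell,c_{S'}^\ell\}$ is already covered by the clique MTET edge $\{s,c_{S'}^\ell\}$), and for $\ell=i^*$ add $\{s,h_{S'}^{i^*}\}$, which simultaneously covers both remaining connector triangles of that index; finally, for each of the $2\lceil\log(\mu_{k+1}+1)\rceil$ rings select the MTET whose $s$-incident edges are precisely $\{\{s,v\}:v\in\bin(s_r^{i^*})\}$. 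A direct count yields $|S|=M$, and every triangle of $G_{x,y}$ is covered: the bit-node triangles $\{s,s_r^\ell,v\}$ with $v\in\bin(s_r^\ell)$ are covered by $\{s,s_r^\ell\}$ when $\ell\ne i^*$ and by $\{s,v\}$ when $\ell=i^*$, and each added input triangle $\{s,s_1^{i'},s_2^{j'}\}$ with $(i',j')\ne(i,j)$ is covered because at least one of $\{s,s_1^{i'}\},\{s,s_2^{j'}\}$ lies in $S$.

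For the ($\Rightarrow$) direction, suppose $S$ is a transversal of $G_{x,y}$ of size $M$. Claim~\ref{clm:bit_correspondence} yields indices $i,j$ with $\{a,a_1^i\},\{a,a_2^j\},\{b,b_1^i\},\{b,b_2^j\}\notin S$. Suppose for contradiction that $\mathrm{DISJ}(x,y)=\texttt{true}$; then at least one of $x_{ij},y_{ij}$ equals $0$, and without loss of generality $x_{ij}=0$, so $\{a_1^i,a_2^j\}$ is in $G_{x,y}$ and the triangle $\{a,a_1^i,a_2^j\}$ must be covered by $S$. Since both of its $a$-incident edges are absent from $S$, the edge $\{a_1^i,a_2^j\}$ itself must belong to $S$; but this edge lies outside the three disjoint edge-sets summed in Corollary~\ref{col:fixed_graph_MTET_lower_bound}, forcing $|S|\ge M+1$, a contradiction.

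The main obstacle is in the $(\Leftarrow)$ direction: showing that the four bit-patterns (for index $i$ on each $A_1/B_1$ ring and index $j$ on each $A_2/B_2$ ring) can be realized \emph{simultaneously} as MTETs of the rings. I plan to resolve this by using the parity structure deliberately engineered into each $20$-ring. By Claim~\ref{clm:ring_of_triangles_opt} each ring admits exactly two MTETs, determined by the parity of the chosen edge positions. The construction places the four distinguished edges $\{a,f_{A_r}^\ell\},\{a,t_{A_r}^\ell\},\{b,f_{B_r}^\ell\},\{b,t_{B_r}^\ell\}$ at positions separated by odd numbers of triangles, which forces $\{a,f_{A_r}^\ell\}$ and $\{b,f_{B_r}^\ell\}$ to share parity (and likewise for the $t$-pair); hence each ring-MTET contains either both $f$-edges or both $t$-edges of that ring. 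Because Alice's and Bob's omitted indices coincide on each side ($i$ for $A_1/B_1$ and $j$ for $A_2/B_2$), the needed bits agree at every position, and for each ring exactly one of the two MTETs fits both Alice's and Bob's required bit. This same parity fact is precisely what powers Claim~\ref{clm:bit_correspondence} in the converse direction, by ruling out the possibility of mismatched indices $i\ne i'$ or $j\ne j'$.
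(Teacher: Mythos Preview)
Your proposal is correct and follows essentially the same approach as the paper: the same piecewise construction $S=S_{\text{cliques}}\cup S_{\text{rings}}\cup S_{\text{bits}}\cup S_H$ for the $(\Leftarrow)$ direction, and the same invocation of Claim~\ref{clm:bit_correspondence} plus the observation that the input edge $\{a_1^i,a_2^j\}$ lies outside the disjoint edge-sets of Corollary~\ref{col:fixed_graph_MTET_lower_bound} for the $(\Rightarrow)$ direction. Your discussion of the ring parity (that the odd spacing forces $\{a,f_{A_r}^\ell\}$ and $\{b,f_{B_r}^\ell\}$ to share parity, so one of the two ring-MTETs matches both sides) is in fact more explicit than the paper, which simply asserts in a parenthetical that the chosen ring solutions also include the corresponding $b$-edges; but this is added clarity rather than a different route.
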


\begin{proof}
	If DISJ$\left(x,y\right)$=\texttt{false}, then there exists a pair of indices $\left(i,j\right)\in{\{0,\dots,\mu_{k+1}\}}^2$ such that $x_{ij}=y_{ij}=1$. Consider the following sets of edges:
	\begin{itemize}
		\item A subset of the edges between the center node $a$ and the clique $C_{A_1}$:\\ $E_{a,C_{A_1}}=\{\{a, c_{A_1}^\ell\}: 0\le\ell\le\mu_{k+1}, \ell\ne i\}$.
		\item A subset of the edges between the center node $a$ and the clique $C_{A_2}$:\\ $E_{a,C_{A_2}}=\{\{a, c_{A_2}^\ell\}: 0\le\ell\le\mu_{k+1}, \ell\ne j\}$.
		\item A subset of the edges between the center node $b$ and the clique $C_{B_1}$:\\ $E_{b,C_{B_1}}=\{\{b, c_{B_1}^\ell\}: 0\le\ell\le\mu_{k+1}, \ell\ne i\}$.
		\item A subset of the edges between the center node $b$ and the clique $C_{B_2}$:\\ $E_{b,C_{B_2}}=\{\{b, c_{B_2}^\ell\}: 0\le\ell\le\mu_{k+1}, \ell\ne j\}$.
	\end{itemize}
	Let $S_{C_{A_1}}, S_{C_{A_2}}, S_{C_{B_1}}, S_{C_{B_2}}$ be the optimal triangle edge transversals of each of the cliques $C_{A_1}\cup \{a\}, C_{A_2}\cup \{a\}, C_{B_1}\cup \{b\}, C_{B_2}\cup \{b\}$ respectively, such that $E_{a,C_{A_1}} \subseteq S_{C_{A_1}}, E_{a,C_{A_2}} \subseteq S_{C_{A_2}}, E_{b,C_{B_1}} \subseteq S_{C_{B_1}}, E_{b,C_{B_2}} \subseteq S_{C_{B_2}}$, which exist due to Corollary \ref{col:complete_graph_edge_transversal}. 
	
	Let us denote the union of these triangle edge transversals by $S_{cliques}  = S_{C_{A_1}} \cup S_{C_{A_2}} \cup S_{C_{B_1}} \cup S_{C_{B_2}}$.  Consider the optimal solutions of each ring of triangles (Figure \ref{fig:fixed graph rings connections}) that include the edges $\{\{a, u\}, u\in \bin(a_1^i)\}$ and the edges $\{\{a, v\}, v\in \bin(a_2^j)\}$ (these solutions also include the edges $\{\{b, u\}, u\in \bin(b_1^i)\}$, $\{\{b, v\}, v\in \bin(b_2^j)\}$). Let us denote the union of the solutions of each ring of triangles above by $S_{rings}$. Let us denote $S_H = \{\{a, h_{A_1}^i\},  \{a, h_{A_2}^j\},  \{b, h_{B_1}^i\},  \{b, h_{B_2}^j\}\}$ and $S_{bits} =\{\{a,a_1^\ell\}: 0\le\ell\le\mu_{k+1}, \ell\ne\ i\} \cup \{\{a,a_2^\ell\}: 0\le\ell\le\mu_{k+1}, \ell\ne\ j\} \cup \{\{b,b_1^\ell\}: 0\le\ell\le\mu_{k+1}, \ell\ne\ i\} \cup \{\{b,b_2^\ell\}: 0\le\ell\le\mu_{k+1}, \ell\ne\ j\}$.
	Thus, we claim that the subset $S = S_{cliques} \cup S_{rings} \cup S_{bits} \cup S_H$  is a triangle edge transversal of $G_{x,y}$ of the required cardinality.
	
	It is easy to validate that the cardinality of $S$ is equal to the required cardinality in the statement of the lemma. We must show that $S$ is a triangle edge transversal. The triangles in each clique $C_R$, for $R\in\{A_1,A_2,B_1,B_2\}$, are covered by $S_{C_R}$. The triangles between the bit-nodes ($A_1,A_2,B_1,B_2$), the center node ($a,b$), the connector nodes $H_R$, for $R\in\{A_1,A_2,B_1,B_2\}$, and the cliques, are covered by either $S_{cliques}$, $S_H$ or $S_{bits}$ (see the triangles in Figure \ref{fig:fixed graph connectors connections}). The triangles between the bit nodes, the bit-gadgets and the center nodes are covered by either $S_{rings}$ (specifically, for the triangles that contain the edge
	$\{a,a_{A_1}^i\}, \{a,a_{A_2}^j\}, \{b,b_{B_1}^i\}$ or $\{b,b_{B_2}^j\}$) or $S_{bits}$.
	The triangles in the 20-rings of triangles are covered by $S_{rings}$.
	Lastly, all the other triangles in the graph $G_{x,y}$ are created after adding the edges corresponding to the strings $x,y$, and they are of the form $\{a,a_{A_1}^{i'},a_{A_2}^{j'}\}$ or $\{b,b_{B_1}^{i'},b_{B_2}^{j'}\}$. Those triangles are covered by $S_{bits}$, which includes at least one edge from every possible triplet of that form except for the two triplets $\{a,a_{A_1}^{i},a_{A_2}^{j}\}$ and $\{b,b_{B_1}^{i},b_{B_2}^{j}\}$. However, these triplets are not triangles in the graph, as it is given that $x_{ij} = 1$ (i.e., the edges $\{a_{A_1}^i, a_{A_2}^j\}, \{b_{B_1}^i, b_{B_2}^j\}$ do not exist).
	
	On the other hand, if the triangle edge transversal $S$ is of the given size, then by Claim \ref{clm:bit_correspondence}, there exist two indices $i,j$ such that none of the edges $\{a,a_{A_1}^i\}, \{a,a_{A_2}^j\}, \{b,b_{B_1}^i\}, \{b,b_{B_2}^j\}$ are in $S$. Thus, the edges $\{a_{A_1}^i,a_{A_2}^j\}, \{b_{B_1}^i,b_{B_2}^j\}$ do not exist, as otherwise they must be in $S$ and the size of the cover is larger than given.
	The non-existence of the edges $\{a_{A_1}^i,a_{A_2}^j\}, \{b_{B_1}^i,b_{B_2}^j\}$ corresponds to the strings $x,y$ having a bit 1 in the index $\left(i,j\right)$. Thus, the inputs are not disjoint.
\end{proof}

We are now finally ready to prove our main theorem.

\begin{proof}[Proof of Theorem \ref{theorem:MTET_lower_bound_congest}]
	Let $MA_i^\ell\subseteq M_i^\ell$ be the ring-auxiliary nodes $MA_i^\ell = \{m_{i,\ell}^j\mid j\in\{0,1,2,10,11,12,13\}\}$ and let $MB_i^\ell\subseteq M_i^\ell$ be the ring-auxiliary nodes $MB_i^\ell = \{m_{i,\ell}^j\mid j\in\{3,4,5,6,7,8,9\}\}$ for $i\in{1,2}$ and $0\le \ell\le \lceil\log\left(\mu_{k+1}+1\right)\rceil-1$. Let $MA$ be the union of the sets $MA_i^\ell$ and let $MB$ be the union of the sets $MB_i^\ell$.
	We divide the nodes of the graph $G$ (and $G_{x,y}$) into two sets. One set is $V_A=A_1\cup A_2\cup F_{A_1}\cup F_{A_2}\cup H_{A_1}\cup H_{A_2}\cup C_{A_1}\cup C_{A_2}\cup MA$ and the other set is $V_B=B_1\cup B_2\cup F_{B_1}\cup F_{B_2}\cup H_{B_1}\cup H_{B_2}\cup C_{B_1}\cup C_{B_2}\cup MB$. 
	
	Note that the number of nodes $n$ is $\Theta\left(k\right)$, as Claim \ref{clm:max_edges_bounds} gives that $\mu_{k+1}=\Theta\left(k\right)$. In particular,  $|A_1|=|A_2|=|H_{A_1}|=|H_{A_2}|=\Theta\left(k\right)$. In addition, the size of the inputs $x$ and $y$, which is equal to $\mu_{k+1}^2$, is $K=\Theta\left(k^2\right)=\Theta\left(n^2\right)$. Furthermore, the cut between the two parts $V_A,V_B$ consists only of the edges of the rings of triangles. Precisely, each ring contributes a constant number of edges to the cut. Hence, the cut size is $\Theta\left(\log\mu_{k+1}\right)=\Theta\left(\log n\right)$. Since Lemma \ref{lem:lower_bound_graph_disj} proves that $G_{x,y}$ is a family of lower bound graphs for triangle edge transversal, and since the communication complexity of set disjointness is linear in the input size, then by applying Theorem \ref{thm:CC_reduction} on the partition $\left\{V_A,V_B\right\}$, we deduce that any algorithm in the \congest model for deciding whether a given graph has a triangle edge transversal of size $M=4\tau\left(K_{k+1}\right)+4\left(\mu_{k+1}+1\right)+2\cdot10\lceil\log\left(\mu_{k+1}+1\right)\rceil$ requires at least a near quadratic number of rounds $\Omega\left(K/\log^2\left(n\right)\right)=\Omega\left(n^2/\log^2\left(n\right)\right)$. 
\end{proof}

\vspace{-0.5cm}
\section{A $(1+\epsilon)$-Approximation for MTET in the \local Model}
\vspace{-0.2cm}
Despite MTET being a global problem, we use the approach of~\cite{GhaffariKM17} to show that it can be well approximated very efficiently.

\trianglesLOCALonePlusEps*

\begin{proof}[Proof of Theorem \ref{theorem:OnePlusEpsInLOCAL}]
	The proof is essentially the ball-carving technique that appears in~\cite{GhaffariKM17} for the problem of approximating minimum dominating set, which we state here for our problem of Triangle Edge Transversal. The rough high level overview is that we take a node and start growing a ball around it, and with each increase in the radius the center node collects all edges in the ball and computes an optimal solution, until the ratio between the new and previous optimal solutions is capped by $(1+\epsilon)$. At this time, we carve out the inner ball and pay for the outer cover, but the overhead of our solution is only $(1+\epsilon)$. Because there are only $m=O(n^2)$ edges, this process can only last for at most $O(\log(m))=O(\log(n))$ iterations in the worst case. We then repeat the ball-carving process with another center node. To obtain parallelism in this process, it is run over a network decomposition, which allows working on multiple center nodes concurrently.
	
	Formally, for a node $v$ and a radius $r$, we denote by $B_{r}(v)$ the ball of radius $r$ around $v$, i.e., the subgraph induced by the nodes that are distant at most $r$ from $v$. We further denote by $g(v,r)$ the size of an optimal edge cover of triangles that have at least one edge in $B_r(v)$ (note that such triangles, and hence also edges in the cover, may be outside of $B_r(v)$, but edges in the cover must have an endpoint in $B_r(v)$).
	
	Consider the following algorithmic template. Let $v_1,\dots,v_n$ be an arbitrary order of the nodes in the graph $G$. We process the nodes in iterations according to this order. Let $R=poly(\log{n},1/\epsilon)$. For $1\leq i\leq n$, an iteration $i$ is composed of steps, where for $1\leq j \leq R$, in step $j$ we compute an optimal triangle cover of $B_{r_j}(v_i)$, where $r_j=2j-1$. If its size $g(v_i,r_j)$ is at most $(1+\epsilon)\cdot g(v_i,r_{j-1})$, then we mark its edges as $C_i$, we set $r(v_i)=r_j$ and we finish this iteration and remove all edges in $B_{r(v_i)}(v_i)=B_{r_{j}}(v_i)$ from the graph (we stress that we keep the edges that have at least one endpoint outside $B_{r_j}(v_i)$). Note that since the values of $g(v_i,r_j)$ grow by a factor of at least $(1+\epsilon)$ for every increase in $j$, we have that $r(v_i)$ is bounded by some value in $poly(\log{n},1/\epsilon)$, which we set as $R$.
	
	By its definition, $\bigcup_{1\leq i\leq n} C_i$ is a triangle cover of the graph $G$: given any triangle $t$ in the graph, consider the first $i$ such that some edge of $t$ belongs to $B_{r(v_i)}(v_i)$. Then $C_i$ includes an edge of $t$. There must exist such $i$ or else the process is not yet finished. We further claim that $\bigcup_{1\leq i\leq n} C_i$ is a $(1+\epsilon)$ approximation to an optimal solution $OPT$. For $1\leq i\leq n$, let $D_i$ be the edges of $OPT$ that cover $B_{r(v_i)-2}(v_i)$. Notice that $|C_i| \leq g(v_i,r(v_i)) \le (1+\epsilon)g(v_i,r(v_i)-2) \leq (1+\epsilon)|D_i|$. Further, the sets $D_i$ are disjoint. Thus, we have that the size of our solution is at most $|\bigcup_{1\leq i\leq n} C_i| \leq \bigcup_{1\leq i\leq n} |C_i| \leq \bigcup_{1\leq i\leq n} (1+\epsilon)|D_i| \leq (1+\epsilon)OPT$.
	
	It remains to show how to implement the above template in the \local model within $poly(\log{n},1/\epsilon)$ rounds. Consider the power graph $G^R$, which has the same set of nodes as $G$ and in which two nodes are connected by an edge if they are within distance $R$ in $G$. The nodes in $G$ invoke a $(c,d)$-network decomposition algorithm over $G^R$, where a network decomposition partitions the graph into disjoint clusters of diameter at most $d$, with a $c$-coloring for the cluster graph. We order the nodes according to the order of $(color_v,id_v)$, where $color_v$ is the color of the cluster to which $v$ belongs and $id_v$ is its identifier. The algorithm proceeds in $c$ phases, where in phase $1\leq\ell\leq c$ each node in a cluster of color $\ell$ collects all edge information from all nodes in its cluster and simulates the above template locally. Because any two nodes within distance $R$ are either in the same cluster or in neighboring clusters and thus with different colors, this approach correctly simulates the template.
	
	The round complexity of the algorithm depends on three parameters -- the time $T$ it takes to construct the network decomposition, the time $R$ it takes to simulate the construction over $G^R$, the number of colors $c$, and the time $d$ it takes to collect cluster information. We have that $R=poly(\log{n},1/\epsilon)$, and it is known that network decompositions with $c,d=O(\log{n})$ can be found in $poly(\log{n})$ rounds, due to the celebrated randomized algorithm of~\cite{LinialS93} or the recent deterministic breakthrough algorithm of~\cite{RozhonG20}. We thus get the claimed number of rounds, which completes the proof.
\end{proof}

\vspace{-0.3cm}
\section{Faster Approximations for the Minimum Triangle Edge Transversal}
\label{sec:trianglesFasterApprox}
\vspace{-0.2cm}
In this section, we provide a reduction from the MTET problem to the MHVC problem and use it to show how the $(3+\epsilon)$-approximation algorithm for MHVC introduced in \cite[Section 3]{BenBasatEKS19} can be simulated in the \local and \congest models to solve MTET. With a slight increase in the time complexity, the algorithm can also approximate MHVC by a factor of $3$. The reduction and the adjusted algorithm are also applicable for the weighted case.

The faster approximation we obtain for MTET in the \local and \congest models is summarized in the following.

\trianglesLOCALthreeAndthreePlusEps*

We denote the input graph for the MTET problem by $G=(V,E)$. An MHVC algorithm considers a communication network on a hypergraph $H$, where every node $v_H$ and hyperedge $e_H$ have their own computation units, and $v_H$ can communicate with $e_H$ if and only if $v_H\in e_H$, i.e., every hyperedge can communicate with its nodes and every node can communicate with all the hyperedges it is a part of. The complexity in this model is measured by rounds. 
The degree of each node is bounded by $\Delta$ and the cardinality of every hyperedge is bounded by $f$ (the rank of the hypergraph).

~\\\textbf{Reduction of MTET to MHVC.} We construct the hypergraph $H_G=(V_{H_G},E_{H_G})$ as the following. Each edge in the original graph $G$ becomes a node in $H_G$, i.e., $V_{H_G} = E$. Every triangle $t=\{e_1,e_2,e_3\}$ in the original graph becomes a hyperedge in $H_G$, i.e., $E_{H_G}=\{\{e_1,e_2,e_3\}: e_1,e_2,e_3\in E \text{ form a triangle}\}$. For the weighted case, the weight of every node in $H_G$ is the same as the weight of the corresponding edge in $E$.

We call the obtained hypergraph \emph{the reduced hypergraph of $G$} and denote it by $H_G$. It is easy to see that the reduced hypergraph $H_G$ is $3$-uniform. Specifically, the rank of the graph is $f=3$. We denote the node in the reduced hypergraph $H_G$ corresponding to an edge $e$ in $G$ by $v^e_{H_G}$, and the hyperedge corresponding to a triangle $t$ in $G$ by $e^t_{H_G}$. We also denote the set of nodes of the edge $e$ and the triangle $t$ by $V_e$ and $V_t$, respectively.

In our case, each node in the reduced hypergraph $v^e_{H_G}$ corresponds to the edge $e=\{u,v\}$ in the original graph. Therefore, the messages of this node will be simulated by the two endpoints of the corresponding edge, i.e., $V_e$. The hyperedges corresponding to the triangles in the original graph will be simulated by the three nodes that forms the triangle, i.e., $V_t$.

We claim that any algorithm for MHVC that runs on the reduced hypegraph in \local can be simulated with the same complexity in the original graph. 
We further claim that any algorithm for MHVC that runs on $H_G$ in \congest can be simulated with a multiplicative overhead of $\Delta$ over the original complexity.

\begin{restatable}{myclaim}{AlgorithmReductionComplexity}
	\label{clm:AlgorithmReductionComplexity}
	Let $G=(V,E)$ be an undirected graph and $H_G$ be the reduced hypergraph of $G$. Then:
	\begin{enumerate}
		\item Any algorithm that solves MHVC in $H_G$ in the \local model that runs in $O(r)$ rounds can be simulated in the graph $G$ in $O(r)$ rounds.
		\item Any algorithm that solves MHVC in $H_G$ in the \congest model that runs in $O(r)$ rounds can be simulated in the graph $G$ in $O(\Delta\cdot r)$ rounds.
	\end{enumerate}
\end{restatable}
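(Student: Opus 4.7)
The plan is to exhibit a direct distributed simulation of the MHVC algorithm on the reduced hypergraph $H_G$ by the vertices of $G$. For each edge $e=\{u,v\}\in E$, both endpoints $u$ and $v$ store and maintain a copy of the state of the hypergraph node $v^e_{H_G}$. For each triangle $t=\{u,v,w\}$ in $G$, all three vertices $u,v,w$ store and maintain a copy of the state of the hyperedge $e^t_{H_G}$. The key structural observation is that every node-hyperedge incidence in $H_G$ corresponds to an (edge, triangle) pair in $G$ whose vertex set lies inside a single triangle, so every communication required by the hypergraph algorithm is between parties that are pairwise adjacent in $G$ and can therefore be delivered without any multi-hop routing.

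As a one-time preprocessing step, each vertex learns the set of triangles it belongs to: every neighbor reports its $G$-neighborhood. In \local this is a single round; in \congest each neighbor's list has length at most $\Delta$, so the exchange takes $O(\Delta)$ rounds, which is absorbed by the $O(\Delta\cdot r)$ target. To keep the two (respectively three) copies of each simulated object consistent throughout the run, I designate a single responsible simulator per object, e.g., the lowest-id endpoint of $e$ for $v^e_{H_G}$ and the lowest-id vertex of $t$ for $e^t_{H_G}$; the responsible simulator performs the update and, if necessary, forwards the updated state to its co-simulators along $G$-edges, incurring at most a constant amount of extra traffic per simulated object per round.

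For part (1), each hypergraph round consists of messages from every $v^e_{H_G}$ to each incident $e^t_{H_G}$ and vice versa. A message between $v^e_{H_G}$ with $e=\{u,v\}$ and $e^t_{H_G}$ with $t=\{u,v,w\}$ only needs to reach the third triangle-vertex $w$, since the remaining simulators of either object already hold the relevant state. In \local messages may be arbitrarily large, so a single $G$-round suffices per hypergraph round, giving the stated $O(r)$ bound.

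For part (2), the main technical step is to bound the congestion on a single $G$-edge. A $G$-edge $\{u,w\}$ is used only by messages associated with triangles containing $\{u,w\}$: for each such triangle $\{u,w,x\}$, the message of $v^{\{u,x\}}_{H_G}$ to $e^{\{u,w,x\}}_{H_G}$ may travel as $u\to w$ (when $u$ is the responsible simulator of $\{u,x\}$), the message of $v^{\{w,x\}}_{H_G}$ to $e^{\{u,w,x\}}_{H_G}$ may travel as $w\to u$, and the state-synchronization messages are also confined to triangle endpoints; crucially, the message of $v^{\{u,w\}}_{H_G}$ to $e^{\{u,w,x\}}_{H_G}$ goes to $x$ and does not cross $\{u,w\}$. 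Since the number of triangles containing $\{u,w\}$ is at most $\min(\deg(u),\deg(w))-1\le \Delta$, at most $O(\Delta)$ messages of $O(\log n)$ bits cross $\{u,w\}$ per simulated round, which pipeline in $O(\Delta)$ \congest rounds; summing over $r$ rounds yields the $O(\Delta\cdot r)$ bound. The main obstacle I expect is exactly this per-edge congestion count, but the triangle structure keeps every required transmission within a single triangle, which is what ultimately caps the count at $O(\Delta)$.
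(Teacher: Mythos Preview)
Your proposal is correct and follows essentially the same approach as the paper: both endpoints of an edge jointly simulate the corresponding hypergraph node, all three triangle vertices jointly simulate the hyperedge, preprocessing by neighbor-list exchange lets every vertex learn its triangles, and the crucial observation is that every node--hyperedge message in $H_G$ stays inside a single $G$-triangle, so the only nontrivial transmission is to the third vertex, yielding $O(\Delta)$ congestion per $G$-edge. The paper packages this via a ``coherency'' invariant (all co-simulators always hold identical essential information, making the hyperedge-to-node direction free) rather than your ``responsible simulator plus forwarding'' device, but the two are equivalent and give the same bounds.
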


To prove Claim \ref{clm:AlgorithmReductionComplexity}, we first show that any round of a MHVC algorithm on the reduced hypergraph can be simulated in $O(1)$ and $O(\Delta)$ rounds in the \local and the \congest model respectively, under the assumption that the information in the endpoints that simulate the same node/hyperedge is \emph{coherent}, as defined shortly. We then show how to preserve the coherency between the simulating endpoints for the next iteration. 

\begin{definition}
	Let $H=(V_H,E_H)$ be a hypergraph and let $Alg$ be a distributed algorithm on hypergraphs. Then the \emph{essential information} that a node knows after round $i$, denoted $I^{Alg}_{v_H, i}$, or a hyperedge knows, denoted $I^{Alg}_{e_H, i}$, is the set of messages receive/sent and variables' states in all rounds in $Alg$ up to round $i$. For essential information $I$, we say that a set of nodes are \emph{$I$-coherent} if all the nodes in that set know $I$.
\end{definition}

The following is a simple observation follows directly from the definition of $Alg$.
\begin{observation}\label{obs:EssentialInformation}
	The essential information of node $v_H$ or a hyperedge $e_H$ after round $i$ is sufficient for computing the messages that the node or the hyperedge need to send in round $i+1$ in $Alg$.
\end{observation}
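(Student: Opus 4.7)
The plan is to unpack the definitions and invoke the standard fact that a distributed algorithm is determined by a local transition rule. In any distributed algorithm $Alg$, the messages that a node $v_H$ transmits in round $i+1$ are a deterministic function of its current local state and the multiset of messages it received during rounds $1, \dots, i$ (together with its initial input and any private randomness, which we regard as part of its state variables). By the definition of $I^{Alg}_{v_H, i}$, which explicitly includes \emph{all} variable states and \emph{all} messages sent and received up to round $i$, both of these ingredients are encoded in the essential information. Consequently, one may compute the round-$(i+1)$ outgoing messages of $v_H$ simply by applying the publicly-known local transition function of $Alg$ to $I^{Alg}_{v_H, i}$.

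I would formalize this with a short induction on $i$. The base case $i = 0$ is immediate: $I^{Alg}_{v_H, 0}$ contains the initial variable state, which by definition of $Alg$ determines the round-$1$ outgoing messages. For the inductive step, assuming the statement holds through round $i$, the algorithm specification guarantees that $v_H$'s round-$(i+1)$ messages are a function of its state at the end of round $i$ together with the messages delivered to it during round $i$; both items are, by construction, contained in $I^{Alg}_{v_H, i}$.

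The same argument applies verbatim to a hyperedge $e_H$: its outgoing messages in round $i+1$ depend only on the state it has maintained and the messages it received from its incident nodes during rounds $1, \dots, i$, all of which are recorded in $I^{Alg}_{e_H, i}$. There is no genuine obstacle here — the observation is essentially a restatement of the fact that the ``local view'' of a processor through round $i$ is exactly the input consumed by the transition function of $Alg$ when producing round-$(i+1)$ outputs. The value of the observation in the sequel is not in its depth but in licensing the subsequent simulation argument, in which multiple nodes of $G$ collectively hold $I^{Alg}_{v_H, i}$ (or $I^{Alg}_{e_H, i}$) and can therefore jointly reproduce the next round of $Alg$ on $H_G$.
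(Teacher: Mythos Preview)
Your proposal is correct and takes essentially the same approach as the paper, which simply remarks that the observation ``follows directly from the definition of $Alg$.'' Your induction on $i$ is harmless but unnecessary: since $I^{Alg}_{v_H,i}$ already records the full state and message history through round $i$, the round-$(i{+}1)$ messages are immediately determined by applying the local transition rule of $Alg$, with no inductive scaffolding required.
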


\begin{myclaim}\label{clm:AlgorithmReductionSimulation}
	Let $G=(V,E)$ be an undirected graph, let $H_G$ be the reduced hypergraph of $G$, let $Alg$ be a MHVC algorithm, and let $I^{Alg}_{v_{H_G}, i}$ and $I^{Alg}_{e_{H_G},i}$ be the essential information that a node $v_{H_G}$ and hyperedge $e_{H_G}$ in $H_G$ know after round $i$, respectively. For any edge $e$ and triangle $t$ in $G$, assume that the nodes $V_e$ are $I^{Alg}_{v^e_{H_G}, i}$-coherent and the nodes in $V_t$  are $I^{Alg}_{e^t_{H_G}, i}$-coherent. Then the nodes of $G$ can simulate round $i+1$ of $Alg$ in the following sense.
	\begin{enumerate}
		\item Each node $v$ in $G$ can compute the set of triangles that it is an endpoint of in $O(1)$ and $O(\Delta)$ rounds in the \local and the \congest models,  respectively.
		\item Any messages sent from a node in $H_G$ to its incident hyperedges in $Alg$ in round $i+1$ can be simulated in $O(1)$ and $O(\Delta)$ rounds in the \local and the \congest models, respectively.
		\item Any messages sent from a hyperedge in $H_G$ to its incident nodes in $Alg$ in round $i+1$ can be simulated in 0 rounds in the \local and the \congest models, respectively.
		\item  After simulating the process of sending the messages in the above two items, for any edge $e$ and triangle $t$ in $G$, the nodes of $V_e$ are $I^{Alg}_{v^e_{H_G}, i+1}$-coherent and the nodes of $V_t$ are  $I^{Alg}_{e^t_{H_G}, i+1}$-coherent.
	\end{enumerate}
\end{myclaim}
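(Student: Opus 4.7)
The plan rests on the structural observation that whenever an edge $e$ of $G$ lies in a triangle $t$, the two endpoints $V_e$ simulating the hypergraph node $v^e_{H_G}$ form a subset of the three endpoints $V_t$ simulating the incident hyperedge $e^t_{H_G}$. This containment will make the hyperedge-to-node direction essentially free and will limit the node-to-hyperedge direction to informing a single additional endpoint per incident triangle. I will prove the four items in order, and at the end verify that coherence is re-established so that Claim~\ref{clm:AlgorithmReductionSimulation} can be iterated across rounds to yield Claim~\ref{clm:AlgorithmReductionComplexity}.

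For item 1 I would have every node $v$ collect its $2$-hop neighborhood, which fully determines the set of triangles through $v$: the triangles are exactly the triples $\{v, u, w\}$ with $u, w \in N(v)$ and $\{u,w\} \in E$. In \local each node sends its neighbor list along every incident edge in a single round, giving $O(1)$; in \congest, pipelining the (up to) $\Delta$ neighbor identifiers along each incident link costs $O(\Delta)$ rounds. For item 2, consider a node $v^e_{H_G}$ simulated by $V_e = \{u, v\}$; for every incident hyperedge $e^t_{H_G}$ with $t = \{u, v, w\}$, both $u$ and $v$ already hold the outgoing message by coherence (using Observation~\ref{obs:EssentialInformation}), so only the third vertex $w$ must be informed, and $w$ is adjacent to both $u$ and $v$ in $G$. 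In \local one round suffices. In \congest the same physical link $\{u, w\}$ may need to carry one distinct message per common neighbor of $u$ and $w$, hence up to $\Delta$ messages of size $O(\log n)$, which pipelines in $O(\Delta)$ rounds. For item 3, the message that $e^t_{H_G}$ sends to an incident node $v^e_{H_G}$ is a function of $I^{Alg}_{e^t_{H_G}, i}$; by coherence this information is held by every endpoint in $V_t \supseteq V_e$, so each endpoint of $V_e$ computes the incoming message locally with no communication.

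It remains to establish item 4. For a triangle $t$, every endpoint in $V_t$ must learn the round-$(i+1)$ updates of the hyperedge state: the incoming messages to $e^t_{H_G}$ are exactly the messages sent by its incident nodes in item 2, which by construction reach the third vertex of $t$ while already being known to the other two by prior coherence, and the outgoing messages from $e^t_{H_G}$ are computed from the old coherent state. Hence $V_t$ becomes $I^{Alg}_{e^t_{H_G}, i+1}$-coherent. Symmetrically, for an edge $e$ both endpoints of $V_e$ know the messages just sent by $v^e_{H_G}$ (by prior coherence) and the messages just received from each incident hyperedge (each such incoming message is computed locally by both endpoints via item 3), so $V_e$ is $I^{Alg}_{v^e_{H_G}, i+1}$-coherent. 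The place where I expect to be most careful is the \congest accounting for item 2: a single link $\{u, w\}$ may simultaneously carry messages generated by several distinct simulated hypergraph nodes associated with different triangles through that edge, and one must verify that the total congestion on any link in a single simulated round of $Alg$ is only $O(\Delta)$, so that $O(\Delta)$ real rounds suffice to deliver all of them and the iterative composition giving the claimed $O(\Delta \cdot r)$ bound goes through.
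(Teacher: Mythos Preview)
Your proposal is correct and follows essentially the same approach as the paper: the key structural observation that $V_e \subseteq V_t$ drives items 2 and 3 in exactly the way you describe, item 1 is handled by exchanging neighbor lists, and item 4 is re-established from coherence plus the newly delivered messages. Your \congest\ congestion bound for item 2 (at most $O(\Delta)$ messages per physical link, one per common neighbor) matches the paper's count of at most $2\Delta-4$ messages per edge.
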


\begin{proof}
	We prove the four items of the simulation:
	\begin{enumerate}
		\item This is simply done by sending the list of neighbors to the each neighbor. Therefore, each node can check if two of its neighbor are also neighbors themselves and if so then add the triangle to its list.
		\item Note that each of the hyperedges that are incident to the node $v^e_{H_G}$ receive the corresponding message. By Observation \ref{obs:EssentialInformation}, the messages that are set to be sent by the node can be computed as a function of the essential information $I^{Alg}_{v^e_{H_G}, i}$. Thus, both of the simulating nodes of $e$, i.e, $V_e$, can compute the messages that the corresponding node $v^e_{H_G}$ needs to send, since they are $I^{Alg}_{v^e_{H_G}, i}$-coherent. The nodes of $V_e$ must inform $V_t$ of the messages that the node $v^e_{H_G}$ would send to $e^t_{H_G}$ for every triangle $t$ that $e$ participates in. Therefore, the nodes must send the corresponding message to the corresponding third node in $V_t$ that is not in $e$. This can be done in $O(1)$ and $O(\Delta)$ rounds in the \local and \congest models,  respectively: for the \local model this is immediate, and for the \congest model, note that each edge can participate in at most $\Delta-2$ triangles, thus, the number of messages sent on an edge is at most $2\Delta-4$.
		\item Note that $V_e\subseteq V_t$ for an edge $e$ in triangle $t$, thus the nodes the edges of the triangle $t$ already know the messages that the triangle (corresponding to the hyperedge $e^t_{H_G}$) should send to its edges ($v^e_{H_G}$) and no rounds are needed for simulating.
		\item We know that the messages that each node or hyperedge sent or received in round $i+1$ are known to the simulating nodes $V_e$ or $V_t$, respectively. Therefore, we only need to show that the nodes can compute the new state of the variables it contains for an edge or a triangle that contains it. Note that the state of the variables are computed as a function of the previous states, the previous messages and the new messages. The previous states and messages are known to every nodes, as a part of the definition of coherency, and the new messages have been computed in $O(1)$ and $O(\Delta)$ rounds in the \local and \congest models, respectively. Therefore, the nodes of $V_e$ and $V_t$ are $I^{Alg}_{v^e_{H_G}, i+1}$-coherent and $I^{Alg}_{e^t_{H_G}, i+1}$-coherent, respectively.
	\end{enumerate}
\end{proof}

Now we can prove Claim \ref{clm:AlgorithmReductionComplexity}, essentially by an induction over Claim~\ref{clm:AlgorithmReductionSimulation}.

\begin{proof}[Proof of Claim \ref{clm:AlgorithmReductionComplexity}]
	We show by induction on the coherency of round $i$ that each round of a MHVC algorithm can be simulated in $O(1)$ and $O(\Delta)$ rounds in the \local and \congest models, respectively, and thus the round complexity of the algorithm is multiplied by these factors. Note that showing coherency every round is sufficient for proving that the rounds of the algorithm can be simulated in the complexities stated above due to Items (2) and (3) in Claim \ref{clm:AlgorithmReductionSimulation}.
	
	For the base case ($i=0$), Item (1) in Claim \ref{clm:AlgorithmReductionSimulation} obtains coherency for the start of the simulation, that is, for every edge $e$ and triangle $t$ in $G$, we have that the nodes in $V_e$ and $V_t$ are $I^{Alg}_{v^e_{H_G}, i}$-coherent and $I^{Alg}_{e^t_{H_G}, i}$-coherent, respectively. This means that they hold the inputs of the nodes in $H_G$ which they need to simulate.

	Assuming that the simulating nodes $V_e$ or $V_t$ are coherent (specifically, $V_e$ and $V_t$ are $I^{Alg}_{v^e_{H_G}, i}$-coherent and $I^{Alg}_{e^t_{H_G}, i}$-coherent, respectively), Item (4) in Claim \ref{clm:AlgorithmReductionSimulation} states that the nodes can run a protocol to stay coherent after round $i+1$ in $O(1)$ and $O(\Delta)$ rounds in the \local and \congest models, respectively. Therefore, round $i+1$ can be simulated in the same complexities. 
\end{proof}

We now plug the MHVC algorithm of \cite[Section 3]{BenBasatEKS19} into our reduction. The following claim summarizes its round complexity.

\begin{myclaim}[{\cite[Corollaries 4.10, 4.12 and Appendix B]{BenBasatEKS19}}]\label{clm:MWHVCAlgorithmComplexity} 
	There are distributed algorithms for MHVC that:
	\begin{enumerate}
		\item compute an $f$-approximation in $O(f\log n)$ rounds in the \local model,
		\item  for $f=O(1)$ and $\epsilon=2^{-O\left((\log \Delta)^{0.99}\right)}$, compute an $(f+\epsilon)$-approximation in $O\left(\frac{\log\Delta}{\log\log\Delta}\right)$ rounds in the \local model, and 
		\item can be adapted to the \congest model without affecting the round complexity.
	\end{enumerate}
\end{myclaim}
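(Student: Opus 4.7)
The plan is to invoke the algorithms of~\cite{BenBasatEKS19} essentially as a black box, so the bulk of the ``proof'' amounts to (i) checking that their distributed hypergraph model coincides with the one described above (every node-unit and every hyperedge-unit is a separate processor, and each communicates in one round with its incident partners), and (ii) matching our parameters $f$, $n$, $\Delta$ with theirs. With these conventions aligned, items~(1) and~(2) are immediate from their Corollaries 4.10 and 4.12, respectively, and item~(3) is their Appendix~B.

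Even though the result is quoted, I would include a short sketch of the underlying algorithm so the complexities look unmysterious. The $f$-approximation is a distributed primal-dual method for the MHVC LP: each hyperedge raises its dual variable in parallel, and a node joins the cover once its dual constraint tightens; because each hyperedge contains at most $f$ nodes, the standard primal-dual inequality gives a cover within an $f$ factor of the LP optimum, hence an $f$-approximation to MHVC. The $O(f\log n)$ round count comes from doubling-style weight-scale phases (contributing the $O(\log n)$ factor) with $O(f)$ iterations per phase. The faster $(f+\epsilon)$-approximation replaces the inner loop by a single more aggressive local rounding step, collapsing the dependence on $n$ to a dependence on $\Delta$; the additive slack $\epsilon=2^{-O((\log\Delta)^{0.99})}$ is precisely what their per-round loss analysis yields, and the round count drops to $O(\log\Delta/\log\log\Delta)$.

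For item~(3), the key observation I would verify in~\cite{BenBasatEKS19}'s Appendix~B is that in every round of both algorithms a node or a hyperedge only needs to send its current dual value, identifier and a few status flags, each of size $O(\log n)$ bits. Hence one round in the abstract hypergraph model is implementable by a single \congest round on the bipartite incidence graph of $H$, with no slowdown.

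The only nontrivial step I anticipate is the parameter bookkeeping when this black box is composed with the reduction of Section~\ref{sec:trianglesFasterApprox}: after passing to $H_G$, the parameter $n$ becomes $|V_{H_G}|=|E(G)|\le n^2$, which only inflates $\log n$ by a constant, and the hypergraph node-degree is bounded by the number of triangles incident to an edge of $G$, which is at most $\Delta-2$. Thus the bounds of~\cite{BenBasatEKS19} translate directly to the bounds claimed here, and combined with Claim~\ref{clm:AlgorithmReductionComplexity} they yield Theorem~\ref{theorem:threeInLOCAL}.
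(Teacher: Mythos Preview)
Your proposal is correct and matches the paper's approach: the paper treats this claim purely as a citation to~\cite{BenBasatEKS19} and gives no proof of its own, so invoking the results as a black box is exactly what is done. Your added sketch of the primal-dual mechanism and the \congest message-size check is helpful context the paper omits; note, however, that your final paragraph on parameter bookkeeping ($|V_{H_G}|\le n^2$, hypergraph degree $\le \Delta-2$) is not part of this claim at all but rather belongs to the proof of Theorem~\ref{theorem:threeInLOCAL}, which the paper handles separately.
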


\begin{proof}
	[Proof of Theorem \ref{theorem:threeInLOCAL}]
	The proof combines Claims \ref{clm:AlgorithmReductionComplexity} and \ref{clm:MWHVCAlgorithmComplexity}. By Claim \ref{clm:AlgorithmReductionComplexity}, the execution of the MHVC algorithm of Claim \ref{clm:MWHVCAlgorithmComplexity} on the reduced hypergraph can be simulated on the original graph in the \local model in the same round complexity and in the \congest model up to a factor $\Delta$ of the round complexity on the hypergraph.
	Then, for its solution to MTET, each node $v$ marks the edges in $\{e=\{u,v\}\mid u\in N(v)\}$ that correspond to the nodes in the solution for MHVC (of $H_G$), which it knows about since it simulates the nodes $v^e_{H_G}$. 
	
	This yields a $3$-approximation algorithm that runs in $O(f\log |V_{H_G}|)=O(3\log |E_G|)=O(\log n^2)=O(\log n)$ and $O(\Delta\cdot f\log |V_{H_G}|)=O(\Delta\log n)$ rounds in the \local and \congest models, respectively. For the $(3+\epsilon)$-approximation, we obtain round complexities of $O(\frac{\log \Delta_H}{\log\log\Delta_H})=O(\frac{\log n}{\log\log n})$ and $O(\frac{\Delta\log \Delta_H}{\log\log\Delta_H})=O(\Delta\frac{\log n}{\log\log n})$, respectively. The correctness of the algorithm (feasibility and approximation ratio) are directly derived from the correctness of the MHVC algorithm (see \cite[Section 4.1]{BenBasatEKS19}).
\end{proof}

~\\
\noindent\textbf{Acknowledgements.} 
This project has received funding from the European Union’s Horizon 2020 research and innovation programme under grant agreement no. 755839. The authors thank Seri Khoury for many useful discussions.

\bibliographystyle{splncs04}
\bibliography{bibliography}

\end{document}